\DeclareFontFamily{U}{skulls}{}
\DeclareFontShape{U}{skulls}{m}{n}{ <-> skull }{}
\newcommand{\be}{\begin{equation}}
\newcommand{\en}{\end{equation}}
\newcommand{\bea}{\begin{eqnarray}}
\newcommand{\ena}{\end{eqnarray}}
\newcommand{\beano}{\begin{eqnarray*}}
\newcommand{\enano}{\end{eqnarray*}}
\newcommand{\bee}{\begin{enumerate}}
\newcommand{\ene}{\end{enumerate}}
\newcommand{\mc}{\mathcal}
\newcommand{\F}{{\cal F}}
\newcommand{\G}{{\cal G}}
\newcommand{\Lc}{{\cal L}}
\newcommand{\1}{1 \!\! 1}
\newcommand{\Hil}{\mc H}
\newtheorem{thm}{Theorem}
\newtheorem{prop}[thm]{Proposition}
\newtheorem{defn}[thm]{Definition}
\newenvironment{proof}{\noindent {\bf Proof --}}{\hfill$\square$ \vspace{3mm}\endtrivlist}
\begin{document}

\thispagestyle{empty}

\begin{center} {\Large \bf Density matrices and entropy operator for non-Hermitian quantum mechanics}   \vspace{1cm}\\

{\large F. Bagarello}\\
  Dipartimento di Ingegneria,
Universit\`a di Palermo, 90128  Palermo, Italy\\
and I.N.F.N., Sezione di Catania, 95123 Catania, Italy\\
e-mail: fabio.bagarello@unipa.it\\

\vspace{2mm}

{\large F. Gargano}\\
Dipartimento di Ingegneria,
Universit\`a di Palermo, 90128  Palermo, Italy\\
e-mail: francesco.gargano@unipa.it\\

\vspace{2mm}

{\large L. Saluto}\\
Dipartimento di Ingegneria,
Universit\`a di Palermo, 90128  Palermo, Italy\\
e-mail: lidia.saluto@unipa.it\\

\end{center}

\vspace{0.5cm}

\begin{abstract}
\noindent

In this paper we consider density matrices operator related to non-Hermitian Hamiltonians. In particular, we analyse two natural extensions of what is usually called a density matrix operator (DM), of pure states and of the entropy operator: we first consider those {\em operators} which are simply similar to a standard DM, and then we discuss those which are intertwined with a DM by a third, non invertible, operator, giving rise to waht we call Riesz Density Matrix operator (RDM). After introducing the mathematical framework, we apply the framework to a couple of applications. The first application is related to a non-Hermitian Hamiltonian describing gain and loss phenomena, widely considered in the context of $PT$-quantum mechanics. The second application is related to a finite-dimensional version of the Swanson Hamiltonian, never considered before, and addresses the problem of deriving a milder version of the RDM when exceptional points form in the system.

\end{abstract}




\newpage

\section{Introduction}
In functional analysis, the analysis of Hilbert spaces and of the operators acting on them is quite important. This is true for mathematical reasons, of course, but also in view of their applications in quantum mechanics. Position, momentum, energy operators, as well as projection, translation, dilation operators, often play some role in the analysis of specific systems and, for this reason, they are very much studied in the literature. This is true also in the context of the recent version of quantum mechanics where self-adjointness of the observables is not necessarily required, \cite{specissue2012}-\cite{bagprocpa}. But many other operators may be relevant. This is the case of the so-called density matrices, whose role turns out to be particularly useful for open quantum systems, \cite{petru}. In the {\em standard} literature on quantum mechanics, a DM $\rho_0$ is, first of all, a self-adjoint bounded operator: $\rho_0=\rho_0^\dagger$. As we will discuss later, this implies that $\rho_0$ admits a set of eigenvectors which, under suitable assumptions, form an orthonormal basis (ONB) of the Hilbert space $\Hil$ where $\rho_0$ acts. But in the past few decades it becames clearer and clearer that ONB are not always the most natural set of vectors appearing when loosing self-adjointness. In many cases, one has to consider bi-orthogonal sets of vectors, which could be Riesz bases or not \cite{bagspringer,heil,chri}. In the literature, this passage from ONB to bi-orthogonal sets has been discussed by various authors, and under many different aspects. We only cite here \cite{bagspringer,brody,mosta}. What is not so considered, to our knowledge, is what are the changes for DMs. In other words: how should we define a DM in presence of a non self-adjoint Hamiltonian? These are indeed only few papers on this topic, as for instance \cite{brody,bagchi,harviou,sergi}, with only few information. For sure, what is still missing, is a general (abstract) treatment of this aspect of DMs. This is exactly what we are beginning here: a detailed analysis of what a DM can be thought to be for a quantum mechanical system driven by a non self-adjoint Hamiltonian. In particular, we will consider two different situations: in the first (and easiest) one, the {\em new} DM is simply similar to $\rho_0$, and the similarity is implemented by a bounded non unitary operator with bounded inverse. In this case, as one can easily imagine, bi-orthogonal Riesz bases will be relevant. This is the case mostly considered in the existing literature, see \cite{bagchi,harviou} in particular. However, we will also consider here the case in which the {\em new} DM $\rho$ is {\bf not} similar to $\rho_0$, but still $\rho$ and $\rho_0$ are linked by a certain intertwining operator, \cite{intop}. We will see that, in this case, the situation is much more delicate, but still interesting. We should also stress that the role of DMs is relevant also in connection with quantum mechanical states, pure or not, and with the definition of an entropy operator. These aspects will also be considered in our analysis, for our {\em extended} DMs.

The paper is organized as follows: in the next section, after a short review on DM, pure states and entropy operator for {\em ordinary} quantum mechanics, we extend these results to the cases where a similarity map exists, Section \ref{sect-sim}, and when it does not, Section \ref{sect-IO}. 
In Section \ref{sect-ex} we propose some examples
to validate our mathematical
framework. In particular, in Section \ref{sect-2D} we present a first application to a two-state system living in $\mathbb{C}^2$, while in Section \ref{sect-Swa} we introduce a sort of Swanson-like Hamiltonian where the usual bosonic ladder operators are replaced by a truncated version of the same operators, living in $\mathbb{C}^3$, or, with a  different view, as an extended version of ladder fermionic operators. For both models, some explicit examples of the DMs considered in  Section \ref{sect2} will be considered. In particular, for the extended Swanson model we will see that there is a difference between the unbroken and the broken phases when considering the entropy operator and its asymptotic behaviour.
Our conclusions are given in Section \ref{sect-concl}.

\section{Density matrices and pure states}\label{sect2}

The first part of this section is devoted to list some (well-known) facts on DMs and pure states in a {\em standard} settings, i.e. for self-adjoint DMs. Then we will extend these considerations to operators which are similar to  self-adjoint DMs. In the third part, we will consider the case in which a self-adjoint DM $\rho_0$ is related to a second operator $\rho$ via some intertwining operator, $V$: $\rho V=V\rho_0$. Of course, if $V^{-1}$ exists, we would go back to the previous situation, where a similarity relation exists between $\rho$ and $\rho_0$. Hence the interesting case will be that in which $V$ has no inverse.

Before we start, let us introduce some useful notation we will use all along the paper: we call $\Hil$ our Hilbert space, endowed with scalar product $\langle.,.\rangle$, and with related norm $\|.\|=\sqrt{\langle.,.\rangle}$. $\Hil$ could be finite or infinite-dimensional.  $B(\Hil)$ is the $C^*$-algebra of the bounded operator acting on $\Hil$. The adjoint $\dagger$ is the one fixed by the scalar product on $\Hil$:  $\langle A^\dagger f,g\rangle=\langle f,Ag\rangle$, for all $f,g\in\Hil$. Here $A\in B(\Hil)$.

\subsection{A short review for $\rho_0=\rho_0^\dagger$}

We begin with the following definition:

\begin{defn}\label{def1}
	An operator $\rho_0\in B(\Hil)$ is called a {\em density matrix}, DM, if $\rho_0>0$ and if $tr(\rho_0)=1$.
\end{defn}

It is worth remarking that our definition here differs from that one usually find in books on quantum mechanics, see \cite{merz,roman} or, more recently, \cite{brody} for instance, since in these latter the authors explicitly require also $\rho_0$ to be Hermitian\footnote{Here Hermitian and self-adjoint will often be used as synonymous.}. This is indeed redundant since the request that $\rho_0$ is positive automatically implies its Hermiticity, \cite{reed}. If $\rho_0$ is a DM, then $|\rho_0|=\sqrt{\rho^\dagger \rho}=\rho_0$, so that $tr|\rho_0|=tr(\rho_0)=1$. Hence $\rho_0\in{\cal T}_1$, the set of all the trace-class elements in $B(\Hil)$, \cite{reed}. It is known that ${\cal T}_1\subset Com(\Hil)$, the set of the compact operator on $\Hil$. Hence we can use the Hilbert-Schmidt theorem, \cite{reed}, which states that $\rho_0$ admits a set of eigenvalues $\{\lambda_j\}$ and an orthonormal basis (ONB) $\F_e=\{e_j\}$, such that
\be
\rho_0\,e_j=\lambda_j\,e_j,
\label{21}\en
where $\lambda_j\rightarrow0$, when $j\rightarrow\infty$. Then we can rewrite $\rho_0$ as follows
\be
\rho_0=\sum_j\lambda_j\,P_j^o,
\label{22}\en
where $P_j^o$ are orthogonal projectors acting as follows: $P_j^of=\langle e_j,f\rangle e_j$, $\forall f\in\Hil$, or, using a bra-ket language, $P_j^o=|e_j\rangle\langle e_j|$. To fix the ideas, we will assume here often that $j\in\mathbb{N}$. In order for $\rho_0$ to be a DM the sequence $\{\lambda_j\}$ must be such that
\be
\sum_j\lambda_j=1, \qquad\mbox{ and }\qquad \lambda_j\in[0,1], \quad \forall j.
\label{22b}\en

Since $\rho_0$ is positive, it admits an unique positive square root $\rho_0^{1/2}$, which belongs to ${\cal T}_2$, \cite{reed}. Moreover, since ${\cal T}_1$ is a two-sided ideal for $B(\Hil)$, it follows that $\rho_0 X, X\rho_0\in{\cal T}_1$, for all $X\in B(\Hil)$. For this reason we can introduce a well defined linear functional $\Phi_{\rho_0}$ on $B(\Hil)$ as follows:
\be
\Phi_{\rho_0}(X)=tr(\rho_0X).
\label{23}\en
$\Phi_{\rho_0}$ is linear, normalized, positive and continuous. More explicitly:
\be
\Phi_{\rho_0}(\alpha X+\beta Y)=\alpha\Phi_{\rho_0}(X)+\beta\Phi_{\rho_0}(Y), \qquad \Phi_{\rho_0}(\1)=1,
\label{24}\en
$\forall X,Y\in B(\Hil)$ and $\alpha,\beta\in\mathbb{C}$. Moreover, if $X\in B(\Hil)$ is positive, $X>0$, then $\Phi_{\rho_0}(X)>0$. Also, if $\|X_n-X\|\rightarrow0$, then $\Phi_{\rho_0}(X_n)\rightarrow\Phi_{\rho_0}(X)$.

The set of DMs, $\G$, is convex: if $\rho_1$ and $\rho_2$ are DMs, then $\rho=\lambda\rho_1+(1-\lambda)\rho_2$ is a DM for all $\lambda\in[0,1]$. Moreover, any DM $\rho$ defines an operator called {\em its entropy}. In particular, $\rho_0$ in (\ref{22}) produces
\be
S(\rho_0)=-\rho_0\,\log(\rho_0)=-\sum_j\lambda_j\,\log(\lambda_j)P_j^o.
\label{25}\en

\vspace{2mm}

{\bf Remarks:--} (1) $S(\rho_0)$ is clearly well defined if the sum is finite, and in this case it is also trivially a bounded operator. Due to the fact that each $\lambda_j$ belongs to the interval $[0,1]$, $\log(\lambda_j)\leq0$ for all $j$. Hence $-\sum_j\lambda_j\,\log(\lambda_j)\geq0$. Notice that, when $\lambda_j=0$, using a well known result, we define $\lambda_j\,\log(\lambda_j)=0 $, by continuity. If the set of $j$'s is infinite, the convergence of (\ref{25}) is more delicate. { For instance, it can be explicitly checked if $\lambda_j=(1-q)q^j$, for all possible $q\in]0,1[$, or if $\lambda_j=\frac{6}{(\pi(j+1))^2}$, $j=0,1,2,\ldots$.  More examples can also be easily constructed. What is seems not so easy is to set up a general proof of this convergence which, however, is not really essential for our purposes here.}

(2) In the literature the trace of $S(\rho_0)$ is usually called the {\em von Neumann entropy}.

\vspace{2mm}

All the normalized vectors $\Psi\in\Hil$ define a DM: calling $\rho_\Psi=P_\Psi=|\Psi\rangle\langle\Psi|$, the orthogonal projection operator associated to $\Psi$, then $\rho_\Psi$ is indeed a DM, as it is easily checked. Moreover, $\forall X\in B(\Hil)$,
$tr(\rho_\Psi X)=\langle\Psi,X\Psi\rangle$.

Among the DMs, a special class is that of so-called {\em pure states}: a DM $\rho_0$ is a pure state (or, maybe more properly, defines a pure state) if there is a normalized vector $\Phi_0\in\Hil$, $\|\Phi_0\|=1$, such that $\rho_0=|\Phi_0\rangle\langle\Phi_0|$.

The following theorem, which can be found in many references on DMs, provides a nice characterization of pure states:

\begin{thm}\label{thm1}
	A DM $\rho$ is a pure state if and only if one of the following properties, all equivalent, is satisfied:
	
	{\bf p1.} $tr(\rho^2)=1$.
	
	{\bf p2.} $S(\rho)=0$.
	
	{\bf p3.} $\rho$ is an extremal point of $\G$.

\end{thm}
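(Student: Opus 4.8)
The plan is to work directly with the spectral representation $\rho=\sum_j\lambda_j P_j^o$ guaranteed by the Hilbert--Schmidt theorem, together with the constraints $\lambda_j\in[0,1]$ and $\sum_j\lambda_j=1$ from (\ref{22b}), and to prove the chain of implications $\textbf{p1}\Rightarrow(\text{all }\lambda_j\in\{0,1\})\Rightarrow(\text{exactly one }\lambda_j=1)\Rightarrow\rho$ is a pure state, and then show each of $\textbf{p1},\textbf{p2},\textbf{p3}$ is equivalent to this normal form. First I would record the elementary observation that $tr(\rho^2)=\sum_j\lambda_j^2$ and $tr(\rho)=\sum_j\lambda_j=1$, and that since $0\le\lambda_j\le1$ we have $\lambda_j^2\le\lambda_j$ with equality iff $\lambda_j\in\{0,1\}$. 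Hence $tr(\rho^2)\le tr(\rho)=1$ always, and $tr(\rho^2)=1$ forces $\lambda_j^2=\lambda_j$ for every $j$; combined with $\sum_j\lambda_j=1$ this means precisely one eigenvalue equals $1$ and all others vanish, say $\lambda_{j_0}=1$. Then $\rho=P_{j_0}^o=|e_{j_0}\rangle\langle e_{j_0}|$, which is a pure state with $\Phi_0=e_{j_0}$. Conversely, if $\rho=|\Phi_0\rangle\langle\Phi_0|$ with $\|\Phi_0\|=1$, then $\rho^2=\rho$ and $tr(\rho^2)=tr(\rho)=1$, giving $\textbf{p1}$.

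For $\textbf{p2}$: by (\ref{25}), $S(\rho)=-\sum_j\lambda_j\log(\lambda_j)\,P_j^o$, and since the $P_j^o$ are mutually orthogonal nonzero projections, $S(\rho)=0$ holds iff every coefficient $\lambda_j\log(\lambda_j)$ is zero. Using the convention $\lambda_j\log\lambda_j=0$ when $\lambda_j=0$ (as in the Remark following (\ref{25})), and noting that for $\lambda_j\in(0,1)$ one has $\lambda_j\log\lambda_j<0$, the vanishing of all these terms is equivalent to $\lambda_j\in\{0,1\}$ for all $j$ — exactly the condition extracted from $\textbf{p1}$. So $\textbf{p1}\Leftrightarrow\textbf{p2}$, both being equivalent to the ``one eigenvalue is $1$'' normal form, hence to $\rho$ being a pure state. (One should be slightly careful in the infinite-dimensional case that the sum defining $S(\rho)$ need not converge in general, but when all $\lambda_j\in\{0,1\}$ the sum is trivially the zero operator, so no convergence issue arises on the relevant side of the equivalence.)

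For $\textbf{p3}$, I would argue both directions. If $\rho$ is not pure, then at least two eigenvalues, say $\lambda_1,\lambda_2$, lie in $(0,1)$; write $\rho=\sum_j\lambda_jP_j^o$ and split off these two terms to exhibit $\rho=\tfrac12\rho_++\tfrac12\rho_-$ with $\rho_\pm=\sum_{j\ge3}\lambda_jP_j^o+(\lambda_1\pm\epsilon)P_1^o+(\lambda_2\mp\epsilon)P_2^o$ for small $\epsilon>0$; for $\epsilon$ small enough the perturbed eigenvalues stay in $[0,1]$ and the trace is preserved, so $\rho_\pm\in\G$, $\rho_+\ne\rho_-$, and $\rho$ is not extremal. (If only one eigenvalue were in $(0,1)$, the trace constraint $\sum\lambda_j=1$ would be violated, so the ``at least two'' claim is automatic once $\rho$ is not pure.) Conversely, suppose $\rho=P_\Phi=|\Phi\rangle\langle\Phi|$ is pure and $\rho=\lambda\rho_1+(1-\lambda)\rho_2$ with $\lambda\in(0,1)$, $\rho_1,\rho_2\in\G$; testing against the rank-one projection $Q=\id-P_\Phi\ge0$ gives $0=tr(\rho Q)=\lambda\,tr(\rho_1Q)+(1-\lambda)\,tr(\rho_2Q)$ with both summands $\ge0$, hence $tr(\rho_iQ)=0$; since $\rho_i\ge0$ and $Q\ge0$ this forces $Q^{1/2}\rho_iQ^{1/2}=0$, i.e. the range of $\rho_i$ is contained in $\mathbb{C}\Phi$, so $\rho_i=c_i|\Phi\rangle\langle\Phi|$ with $c_i=tr(\rho_i)=1$, giving $\rho_1=\rho_2=\rho$. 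Thus $\rho$ is extremal, completing the cycle $\textbf{p1}\Leftrightarrow\textbf{p2}\Leftrightarrow\textbf{p3}\Leftrightarrow$ pure state.

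The main obstacle I anticipate is purely bookkeeping in the infinite-dimensional setting: making sure the perturbation argument for $\textbf{p3}$ respects the constraint $\sum_j\lambda_j=1$ exactly (which is why $\epsilon$ is added to one eigenvalue and subtracted from another) and that $\rho_\pm$ remain bounded positive trace-class operators — all of which follow because we only finitely perturb a convergent nonnegative summable sequence. The analytic content is otherwise elementary, resting entirely on the scalar inequality $t^2\le t$ on $[0,1]$ and the strict negativity of $t\log t$ on $(0,1)$.
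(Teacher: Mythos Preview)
Your proof is correct and follows the standard textbook argument. Note, however, that the paper does not actually supply a proof of this theorem: it introduces Theorem~\ref{thm1} with the remark that it ``can be found in many references on DMs'' and states it without demonstration, treating it as background for the later non-Hermitian extensions (Theorems~\ref{thm2} and~\ref{thm3}). So there is no in-paper proof to compare against; your argument simply fills in what the authors left to the literature, and it does so cleanly via the spectral decomposition and the elementary scalar facts $t^2\le t$ and $t\log t<0$ on $(0,1)$.
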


\subsection{Similarity operators and DMs}\label{sect-sim}

The first natural extension of a DM is the one which is generated by a DM with the action of a bounded operator with bounded inverse. This is exactly what happens when going from orthonormal to Riesz bases, and in this sense it is a relevant situation both for mathematics, see \cite{heil,chri}, and for more physical situations, \cite{bagspringer}. What we will see here is that this extension is not entirely trivial, and produces several interesting results.

\begin{defn}\label{defRDM}
	Let $\rho_0$ be a DM and $R\in B(\Hil)$ invertible, with inverse in $B(\Hil)$. The operator\be
	\rho=R\rho_0 R^{-1}
	\label{26}\en
	is called an $(R,\rho_0)$-Riesz density matrix.
\end{defn}
Quite often, in the following, we will simply call $\rho$ a Riesz density matrix (RDM). In particular, this will be done whenever the role of $R$ and $\rho_0$ is clear. It is clear that the interesting situation is when $R$ is not unitary. In fact, if $R^\dagger=R^{-1}$, $\rho$ in (\ref{26}) shares with $\rho_0$ the same properties. Therefore, from now on, except if explicitly stated, we will work under the assumption that  $R^\dagger\neq R^{-1}$.

The first remark is that $\rho\in{\cal T}_1$. This is because $\rho_0\in{\cal T}_1$, which is an ideal for $B(\Hil)$. Now, if we define  two bi-orthonormal Riesz bases $\F_\varphi=\{\varphi_j=R e_j\}$ and  $\F_\psi=\{\psi_j=(R^{-1})^\dagger e_j\}$, in analogy with (\ref{22}) we can rewrite $\rho$ as follows:
\be
\rho=\sum_j\lambda_j\,P_j,
\label{27}\en
where $P_j$ are (non-orthogonal) projectors acting as follows: $P_jf=\langle \psi_j,f\rangle \varphi_j$, or, using a bra-ket language, $P_j=|\varphi_j\rangle\langle \psi_j|$. Using (\ref{26}) it is easy to check that $tr(\rho)=1$. However, it is quite easy to see that, in general, $\rho$ needs not being positive, or even self-adjoint. Indeed, it is sufficient to consider the following simple example:
$$
\rho_0=\frac{1}{5}\left(
\begin{array}{cc}
	2 & 1  \\
	1 & 3  \\
\end{array}
\right), \qquad R=\left(
\begin{array}{cc}
	1 & 2  \\
	1 & 3  \\
\end{array}
\right) \qquad \mbox{ with }\qquad  R^{-1}=\left(
\begin{array}{cc}
	3 & -2  \\
	-1 & 1  \\
\end{array}
\right).
$$
With these choices, we find
$
\rho=\left(
\begin{array}{cc}
	1 & -\frac{1}{5}  \\
	1 & 0  \\
\end{array}
\right).
$
It is clear that $\rho\neq\rho^\dagger$. If we further consider the vector $f=\left(
\begin{array}{c}
	\frac{2}{5}   \\
	-1  \\
\end{array}
\right)$ then $\langle f,\rho f\rangle=-\frac{4}{25}$. Hence $\rho$ is not positive. Of course this simple situation shows that $\rho$ in (\ref{27}) has not the same properties of $\rho_0$, expect the fact that they both have unit trace.

Going back to $P_j$, while it is clear that $P_jP_k=\delta_{j,k}P_j$, it is also clear that $P_j^\dagger=|\psi_j\rangle\langle \varphi_j|\neq P_j$, in general.

It is easy to check that the set of all the  $(R,\rho_0)$-RDMs, $\G(R,\rho_0)$, for $R$ fixed, is closed under convex combinations: if $\rho_1,\rho_2\in\G(R,\rho_0)$, then $\lambda\rho_1+(1-\lambda)\rho_2\in\G(R,\rho_0)$ as well, for all $\lambda\in[0,1]$. 

\vspace{2mm}

Given a RDM we can introduce a related linear functional as we did in (\ref{23}):
\be
\Phi_{\rho}(X)=tr(\rho X)=tr(R\rho_0R^{-1}X)=\Phi_{\rho_0}(X_R),
\label{28}\en
where we have introduced the short-hand notation $X_R=R^{-1}XR$, and where $\Phi_{\rho_0}$ is the state in (\ref{23}). $\Phi_{\rho}$ is not a state in the usual sense, \cite{brat}. In particular, while it is easy to check that $\Phi_\rho$ is linear, normalized (i.e. $\Phi_\rho(\1)=1$) and continuous, since $|\Phi_\rho(X)|\leq\|R\|\|R^{-1}\|\|X\|$, for all $X\in B(\Hil)$, it is also clear that if $X=X^\dagger$ then $\Phi_\rho(X)$ needs not to be real, and that if $X>0$ then $\Phi_\rho(X)$ needs not to be positive. This is because $X_R$ is not Hermitian (even if $X=X^\dagger$) and $X_R$ is not positive, even if $X>0$.

\vspace{2mm}

In Definition \ref{defRDM} our starting point is a DM, and a bounded operator $R$ with bounded inverse. With these ingredients we can define a RDM. In fact, this construction can be reversed: suppose we have a $\rho\in B(\Hil)$ such that its eigenvalues and eigenvectors  satisfy the following properties:
\begin{equation}\label{29}
	\rho\,\varphi_j=\lambda_j\varphi_j, \qquad \lambda_j\in[0,1], \quad \mbox{and}\quad \sum_j\lambda_j=1,
\end{equation}
and $\F_\varphi=\{\varphi_j\}$ is a Riesz basis. Then  we have the following result:

\begin{thm}
	Under the above assumption $\rho$ is a RDM. 
\end{thm}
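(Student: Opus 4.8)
The plan is to reconstruct an operator $R \in B(\Hil)$ with bounded inverse such that $\varphi_j = R e_j$ for some ONB $\F_e = \{e_j\}$ and such that $\rho = R \rho_0 R^{-1}$ for the associated self-adjoint DM $\rho_0 = \sum_j \lambda_j |e_j\rangle\langle e_j|$. The basic tool is the characterization of Riesz bases: $\F_\varphi = \{\varphi_j\}$ is a Riesz basis of $\Hil$ if and only if there exist an ONB $\F_e = \{e_j\}$ and a bounded, boundedly invertible operator $R$ with $R e_j = \varphi_j$ for all $j$ (see \cite{heil,chri}). So the first step is simply to invoke this to produce $R$ and $\F_e$.

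Next I would define $\rho_0 = \sum_j \lambda_j P_j^o$ with $P_j^o = |e_j\rangle\langle e_j|$. By (\ref{29}) the $\lambda_j$ lie in $[0,1]$ and sum to $1$, so by the discussion around (\ref{22})--(\ref{22b}) this $\rho_0$ is a bona fide self-adjoint DM: it is positive, bounded, and $tr(\rho_0) = \sum_j \lambda_j = 1$, hence $\rho_0 \in {\cal T}_1$. The only mild care needed is the convergence of the series defining $\rho_0$ in the infinite-dimensional case, but since $\lambda_j \to 0$ and $\sum_j \lambda_j = 1$ this is the standard trace-class/Hilbert–Schmidt situation already used in Section II.A.

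Then comes the core computation: I claim $\rho = R \rho_0 R^{-1}$. Both sides are bounded operators, so it suffices to check they agree on a dense set, and in fact on the Riesz basis $\F_\varphi$, which spans a dense subspace. For a fixed $k$, write $R^{-1}\varphi_k = e_k$ (since $R e_k = \varphi_k$), then $\rho_0 e_k = \lambda_k e_k$, and finally $R(\lambda_k e_k) = \lambda_k \varphi_k = \rho \varphi_k$ by (\ref{29}). Hence $R\rho_0 R^{-1}$ and $\rho$ coincide on every $\varphi_k$, and by linearity and boundedness on all of $\Hil$. Therefore $\rho = R\rho_0 R^{-1}$ fits Definition \ref{defRDM} with this $R$ and $\rho_0$, i.e. $\rho$ is an $(R,\rho_0)$-RDM.

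The main obstacle — really the only subtle point — is making sure the eigenvectors $\varphi_j$ used to define $R$ are the \emph{same} ones appearing in the hypothesis (\ref{29}), i.e. that the labelling of the Riesz basis is compatible with the labelling of the eigenvalue equation, and that distinct indices with possibly equal eigenvalues are handled consistently. If $\rho$ has repeated eigenvalues one must check that the $\varphi_j$ spanning a given eigenspace are among those the Riesz-basis structure supplies; since $\F_\varphi = \{\varphi_j\}$ is assumed to be a Riesz basis outright, this is automatic — one takes the ONB $\F_e$ provided by the Riesz-basis theorem applied to exactly this indexed family. A secondary point worth a sentence is that one should note $R$ is not unique (any unitary twist $RU$ with $U$ diagonal in $\F_e$ works), but existence is all that is required. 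One may also remark, as a byproduct, that $\F_\psi = \{(R^{-1})^\dagger e_j\}$ is then the biorthogonal Riesz basis and $\rho = \sum_j \lambda_j |\varphi_j\rangle\langle\psi_j|$, recovering (\ref{27}).
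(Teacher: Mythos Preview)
Your proposal is correct and follows essentially the same approach as the paper: invoke the Riesz-basis characterization to produce $R$ and the ONB $\F_e$, identify the self-adjoint DM $\rho_0=\sum_j\lambda_jP_j^o$, and verify the similarity $\rho=R\rho_0R^{-1}$. The only cosmetic difference is that the paper defines $\tilde\rho=R^{-1}\rho R$ first and then recognizes it as $\rho_0$, whereas you define $\rho_0$ directly and check the similarity on the basis vectors; the content is the same.
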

\begin{proof}
	Since $\F_\varphi$ is a Riesz basis we know that an $R\in B(\Hil)$ exists, with $R^{-1}\in B(\Hil)$, and an ONB $\F_e=\{e_j\}$, such that $\varphi_j=Re_j$. We also know that $\F_\psi=\{\psi_j=(R^{-1})^\dagger e_j\}$ is another Riesz basis, bi-orthonormal to $\F_\varphi$. Formula (\ref{29}) produces now $\tilde\rho\,e_j=\lambda\,e_j$, where $\tilde\rho=R^{-1}\rho R$, which is obviously bounded. Now our claim follows from the fact that $\tilde\rho$ can be written as in (\ref{22}), $\tilde\rho=\rho_0=\sum_j\lambda_j\,P_j^o$, and from the relation between $e_j$, $\varphi_j$ and $\psi_j$.

\end{proof}

{\bf Remarks:--} (1) Using the notation of Definition \ref{defRDM}, we can say that $\rho$ is a $(R,\tilde\rho)$-RDM.

\vspace{1mm}

(2) It is interesting to observe that the assumption of having a bounded $\rho$ is not really needed here, since it follows from the eigenvalue equation $\tilde\rho\,e_j=\lambda\,e_j$. Indeed, let $A$ be a generic operator satisfying $A\,e_j=\lambda\,e_j$, for some sequence $\lambda_j\in[0,1]$. Hence, taken $f\in D(A^\dagger)$ (to be identified) we can write, using the Parseval identity for $\F_e$
$$
\|A^\dagger f\|^2=\sum_j|\langle A^\dagger f,e_j\rangle|^2=\sum_j|\langle f,A e_j\rangle|^2=\sum_j|\lambda_j|^2|\langle f, e_j\rangle|^2\leq
$$
$$
\leq \sum_j|\langle f, e_j\rangle|^2=\|f\|^2.
$$
Hence $A^\dagger$ is bounded on $D(A^\dagger)$, so that it can be extended to all $\Hil$, and it is still bounded, with $\|A^\dagger\|\leq1$. This implies that $\|A\|=\|A^\dagger\|\leq1$. Hence $A$ is also bounded. Going back to our original problem, we find that $\tilde\rho\in B(\Hil)$. But $\rho=R\tilde\rho R^{-1}$. Hence also $\rho$ is bounded.

\vspace{2mm}

It may be useful to observe that $\rho$ in (\ref{29}) is also associated to a second bounded operator, $\rho^\dagger$, with the same (real) eigenvalues and with eigenvectors which are exactly the vectors in $\F_\psi$:
\begin{equation}\label{29-bis}
	\rho^\dagger \psi_j=\lambda_j\psi_j, \qquad \rho^\dagger=\sum_j\lambda_j\,P_j^\dagger,
\end{equation} 
where $P_j^\dagger=|\psi_j\rangle\langle \varphi_j|$. All we have deduced for $\rho$, of course, can be simply restated for $\rho^\dagger$.

The easiest, and possibly more natural, way to introduce a {\em pure state} in our case is just to require that $\rho$ in Definition \ref{defRDM} is the image of a pure state, i.e. that $\rho_0$ in (\ref{26}) is a pure state. Stated differently, we have the following:
\begin{defn}\label{def5}
 An $(R,\rho_0)$-RDM is a {\em Riesz pure state} (RPS) if $\rho_0$ is a pure state, i.e. if it exists a normalized vector $\Phi_0$ such that $\rho_0=|\Phi_0\rangle\langle\Phi_0|$. In this case, calling $\varphi_0=R\Phi_0$ and $\psi_0=(R^{-1})^\dagger\Phi_0$ we can write
 \begin{equation}\label{212}
 	\rho=R\rho_0R^{-1}=|\varphi_0\rangle\langle\psi_0|.
 \end{equation}
\end{defn}
It is clear that $\langle\varphi_0,\psi_0\rangle=1$. In this case we have
\begin{equation}
	\Phi_{\rho}(X)=\Phi_{\rho_0}(X_R)=\langle\Phi_0,X_R\Phi_0\rangle=\langle\psi_0,X\varphi_0\rangle,
\end{equation}
for all $X\in B(\Hil)$. This formula shows that a RPS does not correspond to a mean value. Which is, of course, in agreement with the fact that $\Phi_\rho$ is not positive defined. This should be kept in mind since it implies that a pure RDM does not necessarily is of the form $|\eta\rangle\langle\eta|$, for $\eta\in\Hil$. 

Following the standard case, we further introduce the entropy operator for $\rho$ as follows:
\be
S(\rho)=R\,S(\rho_0)R^{-1},
\label{211}\en
which is bounded since it is the product of three bounded operators, at least if $S(\rho_0)$ is bounded, as e.g. in our examples.

Theorem \ref{thm1} can be restated here, slightly changed, and we have the following:

\begin{thm}\label{thm2}
	A RDM $\rho$ is a RPS if and only if one of the following equivalent properties is satisfied:
	
	{\bf p1$'$.} $tr(\rho^2)=1$.
	
	{\bf p2$'$.} $S(\rho)=0$.

\end{thm}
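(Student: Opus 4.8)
The plan is to reduce both characterizations to the already-established standard case, Theorem~\ref{thm1}, by transporting the quantities $tr(\rho^2)$ and $S(\rho)$ back to $\rho_0$ along the similarity $R$. The only facts needed beyond Theorem~\ref{thm1} are the cyclicity of the trace on ${\cal T}_1$, the ideal property of ${\cal T}_1$, the invertibility of $R$, and Definition~\ref{def5} (which says precisely that $\rho$ is a RPS iff $\rho_0$ is a pure state).

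First I would handle \textbf{p1$'$}. Since $\rho=R\rho_0R^{-1}$ we have $\rho^2=R\rho_0^2R^{-1}$, and $\rho_0^2\in{\cal T}_1$ because $\rho_0\in{\cal T}_1$ and ${\cal T}_1$ is an ideal; hence $\rho^2\in{\cal T}_1$ and $tr(\rho^2)$ is well defined. Writing $\rho^2=(R\rho_0^2)R^{-1}$ with $R\rho_0^2\in{\cal T}_1$ and $R^{-1}\in B(\Hil)$, cyclicity of the trace gives
\be
tr(\rho^2)=tr\big((R\rho_0^2)R^{-1}\big)=tr\big(R^{-1}(R\rho_0^2)\big)=tr(\rho_0^2).
\en
Thus \textbf{p1$'$} holds iff $tr(\rho_0^2)=1$, which by Theorem~\ref{thm1}(\textbf{p1}) is equivalent to $\rho_0$ being a pure state, which by Definition~\ref{def5} is exactly the statement that $\rho$ is a RPS.

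Next I would treat \textbf{p2$'$}. By definition (\ref{211}), $S(\rho)=R\,S(\rho_0)\,R^{-1}$. Since $R$ is invertible in $B(\Hil)$, left multiplication by $R^{-1}$ and right multiplication by $R$ show that $S(\rho)=0$ if and only if $S(\rho_0)=0$. By Theorem~\ref{thm1}(\textbf{p2}) this in turn is equivalent to $\rho_0$ being a pure state, hence, again by Definition~\ref{def5}, to $\rho$ being a RPS. Stringing the two chains together yields $\textbf{p1$'$}\iff\rho\text{ is a RPS}\iff\textbf{p2$'$}$, which is the claim.

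I do not anticipate a genuine obstacle, as the statement is essentially a corollary of Theorem~\ref{thm1}: the affine bijection $\rho_0\mapsto R\rho_0R^{-1}$ carries the standard structure over verbatim. The one point worth a word of care is that \textbf{p2$'$} only makes literal sense once $S(\rho_0)$ — and hence $S(\rho)=R\,S(\rho_0)\,R^{-1}$ — is a well-defined bounded operator; in the pure-state situation this is automatic since $S(\rho_0)=0$, and in the general infinite-dimensional case one should, as already remarked after (\ref{25}) and (\ref{211}), restrict to sequences $\{\lambda_j\}$ for which the series (\ref{25}) converges in operator norm. I would also note in passing why the analogue of Theorem~\ref{thm1}(\textbf{p3}) is omitted here: extremality could be transported through $R$ in the same way, but since that reduction is identical to the ones above it would add nothing, so only \textbf{p1$'$} and \textbf{p2$'$} are listed.
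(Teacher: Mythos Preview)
Your proposal is correct and follows essentially the same route as the paper: both reduce \textbf{p1$'$} and \textbf{p2$'$} to the corresponding properties of $\rho_0$ via the similarity $R$ and then invoke Theorem~\ref{thm1}. The only cosmetic difference is that for the implication ``RPS $\Rightarrow$ \textbf{p1$'$}'' the paper uses (\ref{212}) to observe $\rho^2=\rho$ directly, whereas you go through $tr(\rho^2)=tr(\rho_0^2)$ for both directions.
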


\begin{proof}
	First we observe that if $\rho$ is a RPS then (\ref{212}) implies that $\rho^2=\rho$. Hence we have $1=tr(\rho^2)=tr(\rho)$. Viceversa, if $\rho$ is a RDM such that $tr(\rho^2)=1$, then, since $tr(\rho^2)=tr(\rho_0^2)=1$. Hence $\rho_0$ is a PS, and $\rho$ is a RPS.
	
	As for {\bf p2$'$}, suppose $\rho$ is a RPS. Then (\ref{211}) implies that $S(\rho)=0$, since $\rho_0$ is a PS. Vice-versa, if $\rho$ is not pure, then $\rho_0$ is not pure, too. Then $S(\rho_0)\neq0$ and, see again (\ref{211}), $S(\rho)\neq0$ as well.

\end{proof}

\vspace{2mm}

{\bf Remark:--} We are not considering here the extremality of $\rho$, point {\bf p3} of Theorem \ref{thm1}, since it is not particularly useful for us, here.

\subsection{Intertwining operators and DMs}\label{sect-IO}

Condition (\ref{26}) can be clearly rewritten $\rho R=R\rho_0$: this means that $R$ is an {\em intertwining operator} (IO) between $\rho$ and $\rho_0$, and the equation is known as an {\em intertwining relation}. Of course, going back from this latter to (\ref{26}) is impossible if $R$ has no inverse. However, also in this case some interesting results can be deduced. This is what we will do in this section: we will work with non invertible intertwining operators, and see what these produce for DMs. We refer to \cite{intop} for some literature on IOs, and to \cite{bagint}-\cite{reboiro} for some results closer to what we will discuss here. Definition \ref{defRDM} is now replaced by the following (milder) alternative:

\begin{defn}\label{defGDM}
	Let $\rho_0$ be a DM, $R\in B(\Hil)$, not invertible, and $\rho\in B(\Hil)$ another bounded operator. We say that $\rho$ is a $(R,\rho_0)$-{\em generalized density matrix} (GDM) if\be
	\rho\, R=R\rho_0.
	\label{213}\en
\end{defn}

Quite often here, as we did in the previous section, we will simply call $\rho$ a GDM. The first simple remark is that $\rho\,R\in{\cal T}_1$, since $R\rho_0\in{\cal T}_1$. However, this does not imply that $\rho\in{\cal T}_1$ as well, of course, since $R^{-1}$ does not exist. Still, using (\ref{21}), (\ref{22}) and (\ref{22b}), we can deduce that, as in (\ref{29})
\begin{equation}\label{214}
	\rho\,\varphi_j=\lambda_j\varphi_j, \qquad \lambda_j\in[0,1], \quad \mbox{and}\quad \sum_j\lambda_j=1.
\end{equation}
However,  $\F_\varphi=\{\varphi_j=Re_j\}$ is no longer a Riesz basis. In fact, the following result is true, \cite{heil}:

\begin{prop}
If $R$ is surjective then $\F_\varphi$ is a frame for $\Hil$.
\end{prop}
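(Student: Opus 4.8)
The plan is to show that if $R$ is surjective, the family $\F_\varphi=\{\varphi_j=Re_j\}$ satisfies the frame inequalities $A\|f\|^2\le\sum_j|\langle\varphi_j,f\rangle|^2\le B\|f\|^2$ for all $f\in\Hil$, with suitable constants $0<A\le B<\infty$. The upper bound is the easy half: for any $f\in\Hil$ we compute $\sum_j|\langle\varphi_j,f\rangle|^2=\sum_j|\langle Re_j,f\rangle|^2=\sum_j|\langle e_j,R^\dagger f\rangle|^2=\|R^\dagger f\|^2\le\|R\|^2\|f\|^2$, using Parseval's identity for the ONB $\F_e$. So $B=\|R\|^2$ works, and this step needs only boundedness of $R$, not surjectivity.

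The lower bound is where surjectivity enters. Since $R\in B(\Hil)$ is surjective, the open mapping theorem gives a constant $c>0$ such that for every $g\in\Hil$ there exists $h\in\Hil$ with $Rh=g$ and $\|h\|\le c\|g\|$; equivalently, $R$ restricted to $(\ker R)^\perp$ is a bijection onto $\Hil$ with bounded inverse, so $\|R^\dagger f\|\ge c^{-1}\|f\|$ — indeed surjectivity of $R$ is equivalent to bounded-below-ness of $R^\dagger$, a standard duality fact. Granting $\|R^\dagger f\|\ge c^{-1}\|f\|$, the computation above immediately yields $\sum_j|\langle\varphi_j,f\rangle|^2=\|R^\dagger f\|^2\ge c^{-2}\|f\|^2$, so we may take $A=c^{-2}$. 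Together with the upper bound, $\F_\varphi$ is a frame.

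The main obstacle — really the only nontrivial point — is justifying $\|R^\dagger f\|\ge c^{-1}\|f\|$ from surjectivity of $R$. I would handle this cleanly by the closed-range / open-mapping argument: $R$ surjective and bounded $\Rightarrow$ $R$ open $\Rightarrow$ the induced map $\widetilde R:\Hil/\ker R\to\Hil$ is a bounded bijection, hence boundedly invertible by the inverse mapping theorem; dualizing, $\widetilde R^\dagger$ is bounded below, and $\widetilde R^\dagger$ is just $R^\dagger$ viewed as a map into $(\Hil/\ker R)^*\cong(\ker R)^\perp$, which is isometric on its range, giving the bound. Alternatively one can cite directly that for $R\in B(\Hil)$, $R$ is onto iff $R^\dagger$ is bounded below (e.g.\ \cite{reed}), and skip the construction. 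Finally I would note, as the proposition's context already hints, that without surjectivity one only gets the upper frame bound, so $\F_\varphi$ is then a Bessel sequence but need not be a frame — consistent with the remark that $R^{-1}$ may fail to exist.
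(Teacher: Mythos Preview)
Your argument is correct: the identity $\sum_j|\langle\varphi_j,f\rangle|^2=\|R^\dagger f\|^2$ via Parseval, together with the standard duality ``$R$ surjective $\Leftrightarrow$ $R^\dagger$ bounded below'' (justified through the open mapping theorem), yields both frame bounds cleanly.

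The paper, however, does not prove this proposition at all; it simply invokes Corollary~8.30 of \cite{heil}, which states precisely that for $R\in B(\Hil)$ and $\{e_j\}$ an ONB, $\{Re_j\}$ is a frame if and only if $R$ is surjective. So your proposal is not so much a different route as a self-contained reproof of the cited corollary. What you gain is transparency---the reader sees exactly where surjectivity enters and why the lower frame bound fails without it---at the cost of length. The paper's citation is terser and defers the work to a standard reference. Both are legitimate; if you want to match the paper's style you could compress your argument to a single line (``$\sum_j|\langle Re_j,f\rangle|^2=\|R^\dagger f\|^2$, and $R$ onto $\Leftrightarrow R^\dagger$ bounded below'') and cite \cite{heil} or \cite{reed} for the duality fact.
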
 
This follows from Corollary 8.30 of \cite{heil}, since $R\in B(\Hil)$. The one in \cite{heil} is a necessary and sufficient condition. Then, since in Definition \ref{defGDM}, $R$ is not required to be surjective, this implies that $\F_\varphi$ is not even a frame, in general. This will be clear later, in Section \ref{sectGDM}, in a concrete example.  However, formula (\ref{214}) allows us to deduce that the various $\varphi_j$ are linearly independent, at least if the eigenvalues of $\rho$ (and $\rho_0$), $\lambda_j$ are all different, which is not always the case  as we will see later. This allows us to introduce $\Lc_\varphi=l.s.\{\varphi_j\}$, the linear span of the $\varphi_j$'s, and its closure $\Hil_\varphi$. It is clear that $\Hil_\varphi\subseteq\Hil$, and $\Hil_\varphi$ is an Hilbert space\footnote{If some of the $\lambda_j$'s coincide, then we can repeat our construction of $\Hil_\varphi$ restricting only to those $\varphi_i$ which are linearly independent, i.e. only to those $\varphi_j$ for which, in (\ref{214}), the corresponding $\lambda_j$ are different. Of course, in this case, if $\dim(\Hil)<\infty$, then $\dim(\Hil_\varphi)<\dim(\Hil)$.}. $\F_\varphi$ is a basis for $\Hil_\varphi$, and it admits an unique bi-orthogonal basis $\F_\psi=\{\psi_j\}$:
\be
\langle \varphi_j, \psi_k\rangle=\delta_{j,k}, \qquad f=\sum_j \langle \varphi_j, f\rangle\,\psi_j=\sum_j \langle \psi_j, f\rangle\,\varphi_j,
\label{215}\en
for all $f\in\Hil_\varphi$. The vectors $\psi_j$ are related to $e_j$ as follows $e_j=R^\dagger\psi_j$. Indeed we have
$$
\langle e_i, e_j\rangle=\delta_{i,j}=\langle \varphi_i, \psi_j\rangle=\langle R e_i, \psi_j\rangle=\langle e_i, R^\dagger\psi_j\rangle,
$$
so that $\langle e_i, e_j-R^\dagger \psi_j\rangle=0$ for all $i$. Hence our claim follows from the completeness of $\F_e$. In this way we go back to similar results as those deduced for RDM, but restricted to $\Hil_\varphi$. In particular we can write $\rho$ (which we here identify with $\rho|_{\Hil_\varphi}$, to simplify the notation) as
\be
\rho=\sum_j\lambda_jP_j, \qquad P_j=|\varphi_j\rangle\langle\psi_j|.
\label{216}\en
If we introduce the following rank one operator $Q_j=|e_j\rangle\langle\psi_j|$ it is easy to check that
$$
P_jR=RQ_jR=RP_j^o,
$$
which reflects the same intertwining equation in (\ref{213}).  The adjoint of $\rho$ is clearly $\rho^\dagger=\sum_j\lambda_jP_j^\dagger$. Hence, at a first view, there are not many differences so far with the case of RDMs. However, this is not really so. In fact, in particular, while if $\rho$ is a RDM then $tr(\rho)=1$, if $\rho$ is a GDM  we cannot conclude that $tr(\rho)=1$ in general. This will be evident in our concrete examples below.

In view of what we have just discussed, it could be convenient to change a little bit the definition of $\rho$ in order to ensure that, even in presence of a non invertible $R$, $\rho$ has trace one. For that, we use an approach based on the idea originally discussed in \cite{bagint}. 

\begin{defn}\label{defPI}
	An operator $R\in B(\Hil)$ has the property $I$, $PI$, if $R^\dagger R$ is invertible in $B(\Hil)$.
\end{defn}
Notice that we are not requiring $R$ to be invertible. It is clear that, if $\dim(\Hil)<\infty$, the existence of $R^{-1}$ is equivalent to the $PI$, since $\det(R^\dagger R)=0$ if and only if $\det(R)=0$. The situation is different when $\dim(\Hil)=\infty$, as the following examples show.

\vspace{2mm}

{\bf Example 1:--} Let $\Hil=l^2(\mathbb{N})$. We call $S_R$ and $S_L$ respectively the right and the left shift on $\Hil$: given $a=(a_1,a_2,a_3,\ldots)\in\Hil$, we put
$$
S_Ra=(0,a_1,a_2,a_3,\ldots), \qquad S_La=(a_2,a_3,a_4,\ldots).
$$
It is clear that $S_LS_Ra=a$, but $S_RS_La\neq a$. Hence $S_L$ is not the inverse of $S_R$. Now, if we put $R=S_R$, it follows that $R^\dagger=S_L$ and $R^\dagger R=S_LS_R=\1$, which is clearly invertible. However, as we have seen, $S_R^{-1}$ does not exist. Hence $S_R$ has the $PI$.

\vspace{2mm}

{\bf Example 2:--} Let $\Hil$ be a generic (infinite-dimensional) Hilbert space and $\F_e=\{e_j\}$ an ONB for $\Hil$. Let us further consider an increasing bounded sequence $\{\epsilon_n\}$ such that $0=\epsilon_0<\epsilon_1<\epsilon_2<\cdots\leq\epsilon_\infty<\infty$. We introduce $R=\sum_{n=0}^{\infty}\epsilon_{n+1}|e_{n+1}\rangle\langle e_n|$. This is a densely defined operator with domain $D(R)\supseteq\Lc_e$, the linear span of the $e_n$'s, with $Re_k=\epsilon_{k+1}e_{k+1}$, $k\geq0$. The adjoint of $R$ turns out to satisfy the lowering condition $R^\dagger e_0=0$ and $R^\dagger e_k=\epsilon_k\,e_{k-1}$, $k\geq1$. We can easily find that
$$
R^\dagger R=\sum_{n=0}^{\infty}\epsilon_{n+1}^2|e_{n}\rangle\langle e_n|, \qquad \mbox{while}\qquad RR^\dagger=\sum_{n=0}^{\infty}\epsilon_{n+1}^2|e_{n+1}\rangle\langle e_{n+1}|.
$$
We observe that $R\in B(\Hil)$, with $\|R\|\leq\epsilon_\infty$, and $R$ is not invertible. However $R^\dagger R$ admits inverse, $(R^\dagger R)^{-1}=\sum_{n=0}^{\infty}\epsilon_{n+1}^{-2}|e_{n}\rangle\langle e_n|$, while $(RR^\dagger)^{-1}$ does not exist, since $0\neq e_0\in \ker(RR^\dagger)$, so that $RR^\dagger$ is not injective.

\vspace{2mm}

Summarizing, these examples (together with those in \cite{bagint}) show that property $PI$ is not trivial, and it makes sense to consider it in our context. In fact, in this case, we  can identify the set $\F_\psi$ above: if we put $\psi_j=R(R^\dagger R)^{-1}e_j$, it is clear that 
$$
\langle \varphi_i, \psi_j\rangle=\langle Re_i, R(R^\dagger R)^{-1}e_j\rangle=\langle e_i, R^\dagger R(R^\dagger R)^{-1}e_j\rangle=\langle e_i, e_j\rangle=\delta_{i,j}.
$$
Furthermore we can check that, using the fact that $R$ and $(R^\dagger R)^{-1}$ are continuous,
$$
\sum_j\langle \varphi_j, f\rangle\psi_j=\sum_j\langle \psi_j, f\rangle\varphi_j=R(R^\dagger R)^{-1} R^\dagger f,
$$
for all $f\in\Hil_\varphi$. Therefore $\hat f=\sum_j\langle \varphi_j, f\rangle\psi_j-f$ and $\check f=\sum_j\langle \psi_j, f\rangle\varphi_j-f$ both belong to the $\ker(R^\dagger)$. Notice now that, in particular,
$$
0=\langle R^\dagger \check f, e_j\rangle=\langle  \check f, R\,e_j\rangle=\langle  \check f, \varphi_j\rangle,
$$
which implies that $\check f=0$, due to the fact that $\F_\varphi$ is total in $\Hil_\varphi$, and $\check f\in\Hil_\varphi$. Hence $f=\sum_j\langle \psi_j, f\rangle\varphi_j$. The fact that $\hat f=0$ can be proved similarly, at least if  $R(R^\dagger R)^{-1}e_j\in\Hil_\varphi$ for all $j$. Then we conclude that, under our assumptions,  $\F_\varphi$ and $\F_\psi$ are bi-orthonormal bases in $\Hil_\varphi$. It is now simple to deduce that
\be
\rho=R\rho_0(R^\dagger R)^{-1}R^\dagger.
\label{217}\en
The first obvious remark is that this formula extends the one in (\ref{212}), which is recovered if $R^{-1}$ exists. Moreover we have $tr(\rho)=tr(\rho_0)=1$, using the property $tr(AB)=tr(BA)$ of the trace. It is also easy to understand that, with our special choice of $\psi_j$, we still have $\rho=\sum_j\lambda_jP_j$, and  $\rho^2=\sum_j\lambda_j^2P_j$. 

We can use $\rho$ as in the previous sections to define a linear functional as in (\ref{28}), but with some changes. In this case we have
\be
\Phi_{\rho}(X)=tr(\rho X)=tr(R\rho_0(R^\dagger R)^{-1}R^\dagger X)=\Phi_{\rho_0}(\tilde X_R),
\label{218}\en
where $\tilde X_R=(R^\dagger R)^{-1}R^\dagger XR$. $\Phi_\rho$ is not positive, and it is not true that, given any $X=X^\dagger$, then $\Phi_\rho(X)\in\mathbb{R}$. In fact, this was not true even in the simpler case of RDMs. On the other hand, $\Phi_\rho$ is linear, normalized, and continous: if $X_n\rightarrow X$ in $B(\Hil)$, then $\Phi_\rho(X_n)\rightarrow\Phi_\rho(X)$ in $\mathbb{C}$. This is a consequence of the inequality
$$
|\Phi_\rho(X)|\leq\|(R^\dagger R)^{-1}\|\|R\|^2\|X\|,
$$
$\forall X\in B(\Hil)$. Going back to the (lack of) positivity of $\Phi_\rho$, we can check that, if $X>0$ is such that $[R^\dagger R,X]=0$, then $\Phi_\rho(X)>0$.

We conclude this abstract analysis of DMs introducing the notion of pure states also for GDM. 

\begin{defn}
	The GDM in (\ref{213}) is a {\em generalized pure state} (GPS) if $\rho_0$ is a pure state, i.e. if it exists a normalized vector $\Phi_0$ such that $\rho_0=|\Phi_0\rangle\langle\Phi_0|$. In this case, calling $\varphi_0=R\Phi_0$ and $\psi_0=R(R^\dagger R)^{-1}\Phi_0$ we can write
	\begin{equation}\label{219}
		\rho=|\varphi_0\rangle\langle\psi_0|.
	\end{equation}
\end{defn}
Connected to this we can introduce the following operator, which we call {\em generalized entropy operator} (GEO): $S(\rho)$ is a GEO if the following intertwining relation holds:
\be
S(\rho)R=RS(\rho_0).
\label{220}\en
The counterpart of Theorem \ref{thm2} is the following:

\begin{thm}\label{thm3}
	A GDM $\rho$ is a GPS if and only if one of the following equivalent properties is satisfied:
	
	{\bf p1$''$.} $tr(\rho^2)=1$.
	
	{\bf p2$''$.} $S(\rho)=0$ on $\Hil_\varphi$.

\end{thm}

The proof is similar to the one of Theorem \ref{thm2} and will not be repeated.

In the following sections we will see how our results look like in two concrete examples.

\section{Examples of generalized DMs}\label{sect-ex}

In this section, we present different examples in which a $(R,\rho_0)$-Riesz density matrix (RDM) can naturally be defined as a suitable deformation of a density operator through \eqref{26}. The first application relies on a construction of a RDM starting from a deformation related to a classical gain and loss system described by a non-Hermitian Hamiltonian. Starting from a DM dependent on time and applying a similarity deformation $R$, we obtain a RDM that preserve trace, entropy and purity (i.e. $tr(\rho^2)$).  The other applications  are connected to a finite-dimensional version of the Swanson oscillator, once again described by a non-Hermitian Hamiltonian \cite{swan,reboiro}. In these case we construct a RDM starting from the possibility of moving around  \textit{exceptional points} and analyze whether such situation induces critical behaviors like the totally loss of purity.
We also determine the conditions for defining a GDM when the deformation matrix $R$ is no more invertible, as described in Section \ref{sect-IO}. 
In all the examples, we shall discuss the conditions under which the RDMs defines a pure state (RPS) or a fully mixed state.
We emphasize that our primary objective in this section is to validate our mathematical framework by deriving the RDM through appropriate deformations of some DM connected to some models somehow related to pseudo-Hermitian quantum mechanics, keeping in mind that, however, there are numerous ways to deform a DM and induce a RDM (or a GDM).

\subsection{Application I: a two-state non-Hermitian system}\label{sect-2D}

In this section we will consider a non-Hermitian system, in particular an open two-state system with balanced gain and loss terms, in the regime of spontaneously broken $\mathcal{PT}$ symmetry, as analyzed in \cite{Felski}.

We begin introducing the two-state Hamiltonian:
\be \label{H-2}
H = \begin{pmatrix}
	r e^{i \theta} & d \\
	d & r e^{-i \theta}  
\end{pmatrix},
\en
were $r$ and $d \in \mathbb{R}$.  Notice that $H \neq H^\dagger$, if $\theta \neq k \pi$, $k \in \mathbb{Z}$.

First of all we determine eigenvalues
and eigenvectors of the system (observing the presence of exceptional points), and then we will analize a DM related to the Hamiltonian and its entropy, defined as in Section \ref{sect-sim}.

The eigenvalues of $H$  are:
\be \label{mu-2}
\mu_{\pm} = r \cos(\theta) \pm \sqrt{d^2 - r^2 \sin^2(\theta)}, 
\en

and their correspondent eigenvectors are:

\be \label{phi-2}
\varphi_{\pm} =(\overline{A_\pm})^{-1} \begin{pmatrix}
	i r \sin(\theta) \pm \sqrt{d^2 - r^2 \sin^2(\theta)}\\
	d
\end{pmatrix},
\en
where $A_\pm = \sqrt{2d^2 -2r^2 \sin^2(\theta) \pm 2 i r \sin(\theta)\sqrt{d^2 - r^2 \sin^2(\theta)}}$, are normalization factors, whose usefulness will be explained immediately afterwards.

It is clear that eigenvalues and eigenvectors depend strongly on the values of the parameters $d$, $r$ and $\theta$, and exceptional points arise when $d^2 = r^2 \sin^2(\theta)$, so that eigenvalues and eigenvectors coalesce. 
Furthermore, when $d^2 > r^2 \sin^2(\theta)$ the eigenvalues are reals and the system is in \textit{unbroken region}, otherwise, they will be complex and the system is in the \textit{broken region}. 

In the {\it unbroken region}, when $d^2 > r^2 \sin^2(\theta)$, $\mu_{\pm}$ are eigenvalues also for $H^\dagger$, and its correspondent eigenvectors are:
\be \label{psi-2}
\psi_{\pm} =(A_\pm)^{-1} \begin{pmatrix}
	-i r \sin(\theta) \pm \sqrt{ d^2 - r^2 \sin^2(\theta)}\\
	d
\end{pmatrix}.
\en

With this choice of $A_\pm$, we have normalized the eigenvectors in order to have $\langle \varphi_j, \psi_i \rangle =  \delta_{j,i}$, and the families $\mathcal{F}_{\varphi} = \{\varphi_\pm\}$ and $\mathcal{F}_{\psi} = \{\psi_\pm\}$ are Riesz-basis, since the model is defined on a finite dimensional Hilbert space. 

In the other case, when $d^2 < r^2 \sin^2(\theta)$, i.e. in the {\it broken region}, we have:
\begin{equation}
	\mu_{\pm} = r \cos(\theta) \pm i \sqrt{ r^2 \sin^2(\theta)-d^2},
\end{equation}
and the eigenvalues and  eigenvectors of $H^\dagger$  are the following ones:
\begin{equation}
	\nu_{\pm} =  \overline{\mu_{\pm}},  \qquad  \tilde{\psi}_{\pm} =  \psi_{\mp}.
\end{equation}
In this case, the normalization factors  $A_\pm$ became real quantities , being $\sqrt{d^2- r^2 \sin^2(\theta)}= i \sqrt{ r^2 \sin^2(\theta)-d^2}$, and  the families $\mathcal{F}_{\varphi}$ and $\mathcal{F}_{\tilde{\psi}}= \{\tilde\psi_\pm\} $ are also bi-orthogonal Riesz-basis, because $\langle \varphi_j, \tilde{\psi}_i \rangle =  \delta_{j,i}$.

\subsubsection{Density matrices}

Our main interest is to show an example of a $(R,\rho_0)$-Riesz density matrix, as defined in Section \ref{sect-sim}. Hence we start with a generic density matrix $\rho_0(0)=\begin{pmatrix} 
	c_1 &  c_2\\
	c_3 &  c_4
\end{pmatrix}$, in which $c_1 + c_4 =1$, $c_3 = c_2^*$, and the $c_j$'s are chosen in such a way $\rho_0(0)$ is positive, then we consider the usual Von Neumann evolution equation starting from an Hermitian Hamiltonian, i.e.
\begin{equation}\label{ro-t-eq}
	\dfrac{d}{dt}\rho_0(t)= -i [H_0,\rho_0(t)],
\end{equation}
where we have put $\hbar=1$ and $H_0=H(\theta=0)$. So, we obtain a density matrix depending on time, $\rho_0(t) =$
\begin{equation}
	\dfrac{1}{2}\begin{pmatrix} 
		1+ i (c_2-c_3) \sin\Omega+(c_1-c_4) \cos \Omega &  c_2+c_3 + i (c_1-c_4) \sin \Omega - (c_3-c_2) \cos\Omega\\
		c_2+c_3 -i (c_1-c_4) \sin\Omega+(c_3-c_2)
		\cos\Omega&  1 -i (c_2-c_3) \sin\Omega-(c_1-c_4) \cos\Omega
	\end{pmatrix},
\end{equation}
where $\Omega=2 d t$. 

To construct an RDM, we consider a particular $\rho_0(t)$, with $c_1=\frac{2}{3}$, $c_2=c_3=0$, $c_4=\frac{1}{3}$. With this choice, $\rho_0(0)$ is diagonal and positive, other than Hermitian, and we have 
\begin{equation}\label{ro-t}
	\rho_0(t) =  \dfrac{1}{2}\begin{pmatrix} 
		1+ \frac{1}{3} \cos (2dt) &   \frac{1}{3} i \sin (2 d t) \\
		- \frac{1}{3} i \sin (2d t)&  1 -\frac{1}{3} \cos (2dt)
	\end{pmatrix}
\end{equation}

As in Section \ref{sect-sim}, we can obtain a RDM using a bounded invertible operator $R$ with bounded inverse. A particular example of $R$ can be constructed by using the eigenstates $\varphi_i$  in \eqref{phi-2}, i.e. 

\begin{equation}\label{R-2}
	R = \begin{pmatrix} 
		\dfrac{2iy + \sqrt{1-4y^2} }{ \sqrt{2(1-4y^2)-4 i y \sqrt{1-4y^2}}} &  \dfrac{2iy - \sqrt{1-4y^2} }{ \sqrt{2(1-4y^2)+4 i y \sqrt{1-4y^2}}}\\
		\dfrac{1}{\sqrt{2(1-4y^2)-4 i y \sqrt{1-4y^2}}}&  \dfrac{1}{  \sqrt{2(1-4y^2) + 4 i y \sqrt{1-4y^2}}}
	\end{pmatrix},
\end{equation}
where we have fixed $d=0.5$, $r=1$ and we have introduced $y=\sin(\theta)$, to simplify the notation. Using this matrix to deform $\rho_0(t)$ we obtain a RDM $\rho_\theta(t)=R \rho_0(t)R^{-1}$, that is:
\begin{equation}\label{rho-theta}
	\rho_\theta(t)=\begin{pmatrix} 
		\dfrac{1}{2}+ \dfrac{1}{3}  \dfrac{ i y  \cos(t)}{\sqrt{1-4y^2}} + \dfrac{2}{3} y \sin(t) &  
		\dfrac{\cos(t)}{6\sqrt{1-4y^2}} + \dfrac{i (1-16y^2) \sin(t)}{6}
		\\
		\dfrac{\cos(t)}{6\sqrt{1-4y^2}} - \dfrac{i \sin(t)}{6} & 
		\dfrac{1}{2}  - \dfrac{1}{3}\dfrac{i y \cos(t)}{\sqrt{1-4y^2}} - \dfrac{2}{3}  y \sin(t) 
	\end{pmatrix},
\end{equation}

Although this matrix depends on time ( and on the deformation parameter  $\theta$ through $y$), its trace is preserved and it is always equal to 1, as expected. Furthermore are preserved its purity and  entropy, that are equal respectively to $\frac{5}{9}$ and $\log(3)-\frac{2}{3}\log(2)$, which are the same values we can obtain from  $\rho_0(t)$. Therefore we are not in presence of a RPS, since the purity is never equal to 1, nor entropy equal to 0. This situation is not surprising because we are deforming the DM with a similarity deformation, that preserve the trace (also in the computation of the entropy and of the purity), and since our $\rho_0(0)$ is not a pure state, and $\rho_0(t)$ in (\ref{ro-t}) is not a pure state either.

\subsection{Application II: The finite dimensional Swanson model}\label{sect-Swa}
Let's now introduce the following finite dimensional version of the Swanson Hamiltonian:
\be \label{H}
H = c^\dagger c + \alpha_1 c^2 + \alpha_2 (c^{\dagger})^2,
\en
where $c$ is a lowering operator, satisfying the (truncated) CCR
$
[c,c^\dag]=(\1-3P_j^o)
$
i.e. $c e_{i+1} = \sqrt{i} e_i$ for $i= 1,2$, $c^\dagger e_{i} = \sqrt{i} e_{i+1}$ for $i= 2,3$ and with $c e_{1}=(c^\dag) e_{3}=0$, where $e_j$ are the canonical o.n. vectors of the $\mathbb{R}^3$ basis, and $\alpha_1$ and $\alpha_2$ are real numbers. A matrix realization of $c$ and $H$ is the following:
\be \label{H-matrix}
c=\begin{pmatrix}
0 & 1 & 0 \\
0 & 0 & \sqrt{2} \\
0 & 0 & 0
\end{pmatrix}
\qquad \text{and} \qquad
H = \begin{pmatrix}
0 & 0 & \sqrt{2} \alpha_1 \\
0 & 1 & 0 \\
\sqrt{2} \alpha_2 & 0 & 2
\end{pmatrix},
\en
This Hamiltonian is clearly non-Hermitian ($H^\dagger \neq H$), when $\alpha_1 \neq \alpha_2$. The eigenvalues of $H$  are:
\be \label{mu}
\mu_1 = 1, \qquad  \mu_2 = 1 - \sqrt{1 + 2 \alpha_1 \alpha_2} \quad \text{and} \quad \mu_3 = 1 + \sqrt{1 + 2 \alpha_1 \alpha_2},
\en
and their correspondent eigenvectors are:
\be \label{phi}
\varphi_1 = \begin{pmatrix}
0\\
1\\
0
\end{pmatrix},
\qquad 
\varphi_2 = \begin{pmatrix}
-\dfrac{ h_3 \mu_3}{\sqrt{2} \alpha_2 } \\
0\\
h_3
\end{pmatrix}, \quad \text{and} \quad 
\varphi_3 = \begin{pmatrix}
-\dfrac{ h_2 \mu_2 }{\sqrt{2} \alpha_2}\\
0\\
h_2
\end{pmatrix}.
\en 
If $1+2\alpha_1\alpha_2\geq0$, $\mu_1,\mu_2,\mu_3$ are also eigenvalues of $H^\dagger$,  with eigenvectors:
\be \label{psi}
\psi_1 = \begin{pmatrix}
0\\
1\\
0
\end{pmatrix},
\qquad 
\psi_2 = \begin{pmatrix}
-\dfrac{ h_3 \mu_3}{ \sqrt{2} \alpha_1} \\
0\\
h_3
\end{pmatrix}, \quad \text{and} \quad 
\psi_3 = \begin{pmatrix}
-\dfrac{h_2 \mu_2}{ \sqrt{2} \alpha_1} \\
0\\
h_2
\end{pmatrix},
\en
where we have defined $h_2= \left( \dfrac{\mu_2^2}{2 \alpha_1 \alpha_2} + 1 \right)^{-1/2}$ and $h_3= \left( \dfrac{\mu_3^2}{2 \alpha_1 \alpha_2} + 1 \right)^{-1/2}$.
The eigenvectors are bi-normalized: $\langle \varphi_j, \psi_i \rangle = \delta_{j,i}$, and the families $\mathcal{F}_{\varphi}$ and $\mathcal{F}_{\psi}$ form two Riesz basis when $1 + 2 \alpha_1 \alpha_2 \neq 0$: we will stress this point later when introducing the matrix $R$ in \eqref{Rm}. When $1 + 2 \alpha_1 \alpha_2 = 0$  we observe that $\mu_2$ and $\mu_3$ coalesce, along with their corresponding eigenvectors, and hence $1 + 2 \alpha_1 \alpha_2 = 0$ describes a curve (i.e., a hyperbola) of exceptional points. In this situation, the family $\mathcal{F}_{\varphi}$ does not form a Riesz basis.
This situation is a typical characterization of the formation of an \textit{exceptional point}, which marks the transition from the unbroken to the broken region. In particular, when $1 + 2 \alpha_1 \alpha_2 > 0$, we are in the \textit{unbroken region}, and the eigenvalues are real. Conversely, when $1 + 2 \alpha_1 \alpha_2 < 0$, we have a pair of complex conjugate eigenvalues $\mu_2 = \overline{\mu_3}$, indicating the \textit{broken region}.
In the latter case, to recover the bi-orthogonality of the eigenvectors of $H$ and $H^\dagger$, we reorder the eigenvectors from the vectors $\psi$:
\begin{equation} \label{psi-tilde} 
	\tilde{\psi}_2 = \psi_3 \qquad \text{and} \qquad \tilde{\psi}_3 = \psi_2,
\end{equation}
so that we again have $\langle \varphi_j, \tilde{\psi_i} \rangle = \delta_{j,i}$. We are using here the same notation already adopted for the previous example.

\subsubsection{RDM I: a time dependent case}
Let us consider	the following 3-dimensional  time dependent $\rho$ whose entries are:

\begin{eqnarray*}
	\rho_{11} & = & \frac{-X \mu_3 \lambda_2 + \mu_2 (\lambda_3 + \alpha_1^2 (\lambda_1 + \lambda_3)) + \alpha_1^2 \mu_2 (-\lambda_1 + \lambda_3) C_{h,2}}{X (\mu_2 - \mu_3)} \\
	\rho_{12} & = & -\frac{h_2 \alpha_1 \mu_2 (\lambda_1 - \lambda_3) (i\sqrt{X} - i\sqrt{X} C_{h,2} - X S_{h})}{2 (X)^{3/2} \alpha_2} \\
	\rho_{13} & = & \frac{\mu_2 \mu_3 (-\lambda_2 + \lambda_3 + \alpha_1^2 (\lambda_1 - 2 \lambda_2 + \lambda_3) + \alpha_1^2 (-\lambda_1 + \lambda_3) C_{h,2})}{\sqrt{2} X \alpha_2 (\mu_2 - \mu_3)} \\
	\rho_{21} & = & \frac{2 \alpha_1 \alpha_2 S_h\, (i X C_h -\sqrt{X} S_h) (\lambda_1 - \lambda_3)}{h_2 (X)^{3/2} (\mu_2 - \mu_3)} \\
	\rho_{22} & = & \frac{\lambda_1 + \alpha_1^2 (\lambda_1 + \lambda_3) + \alpha_1^2 (\lambda_1 - \lambda_3) C_{h,2}}{X} \\
	\rho_{23} & = & \frac{\sqrt{2} \alpha_1 \mu_3 S_h\, (i X C_h -\sqrt{X} S_h) (\lambda_1 - \lambda_3)}{h_2 (X)^{3/2} (\mu_2 - \mu_3)} \\
	\rho_{31} & = & \frac{\sqrt{2} \alpha_2 (\lambda_2 - \lambda_3 - \alpha_1^2 (\lambda_1 - 2 \lambda_2 + \lambda_3) + \alpha_1^2 (\lambda_1 - \lambda_3) C_{h,2})}{X (\mu_2 - \mu_3)} \\
	\rho_{32} & = & -\frac{\sqrt{2} h_2 \alpha_1 S_h (-i X C_h -\sqrt{X} S_h) (\lambda_1 - \lambda_3)}{(X)^{3/2}} \\
	\rho_{33} & = & \frac{(\mu_2 + 2 \alpha_1^2 \mu_2) \lambda_2 - \mu_3 (\lambda_3 + \alpha_1^2 (\lambda_1 + \lambda_3)) + \alpha_1^2 \mu_3 (\lambda_1 - \lambda_3) C_{h,2}}{X (\mu_2 - \mu_3)}
\end{eqnarray*}
	where we have defined $X=1 + 2 \alpha_1^2,S_h=\sin(\sqrt{1 +2\alpha_1^2}\,t),C_h=\cos(\sqrt{1 +2\alpha_1^2}\,t),C_{h,2}=\cos(2\,\sqrt{1 +2\alpha_1^2}\,t)$, $\mu_1,\mu_2,\mu_3$ are the eigenvalues of the finite dimensional Swanson model,  $\lambda_1,\lambda_2,\lambda_3$ are chosen  to satisfy $\sum_j\lambda_j=1$,  and where the $\lambda_j$ coefficients will be defined shortly. Clearly $\rho$ is well defined whenever $\mu_2\neq \mu_3$, that is when $\alpha_1\alpha_2\neq -1/2$.
It is possible to check that $\rho$ is actually a $RDM$ related to a $DM$ trough 	\eqref{26} where the $\rho_0(t)$ is defined as 	\begin{equation}
	\rho_0(t)=\begin{pmatrix}
		\frac{\lambda_1 + C_{h,2} \alpha_1^2 (\lambda_1 - \lambda_3) + \alpha_1^2 (\lambda_1 + \lambda_3)}{X} & 0 & -\frac{\sqrt{2} \, S_h (i\,S_h \sqrt{X} + i \, C_h \, X) \alpha_1 (\lambda_1 - \lambda_3)}{(X)^{3/2}} \\
		0 & \lambda_2 & 0 \\
		-\frac{\sqrt{2} \, S_h (i\,S_h \sqrt{X} - i \, C_h \, X) \alpha_1 (\lambda_1 - \lambda_3)}{(X)^{3/2}} & 0 & \frac{\lambda_3 + C_{h,2} \alpha_1^2 (-\lambda_1 + \lambda_3) + \alpha_1^2 (\lambda_1 + \lambda_3)}{X}
	\end{pmatrix}
\end{equation}
	and where $R$ is the matrix consisting of the eigenvectors of the family $\mathcal{F}_{\varphi} $, i.e.
	\be \label{Rm}
	R = \begin{pmatrix}
		0 & -\dfrac{ h_3 \mu_3}{\sqrt{2} \alpha_2}  & -\dfrac{h_2 \mu_2}{\sqrt{2} \alpha_2}  \\
		1 & 0 & 0 \\
		0 & h_3 & h_2
	\end{pmatrix}.
	\en
To clarify our choices here we observe that $\rho_0(t)$ is the evolved density matrix obtained via the usual von Neumann evolution when the Hermiticity of the Swanson Hamiltonian is restored, that is, when $\alpha_2 = \alpha_1$. For simplicity, we consider the initial condition $\rho_0(0) = \sum_j \lambda_j |e_j\rangle \langle e_j|$, and if at least two of the $\lambda_j$ are different from zero, this initial condition represents an ensemble of states: $\rho_0(0)$ is not pure. In other words: we start from $\rho_0(0)$ and let it evolve using, as in the previous example, equation (\ref{ro-t-eq}) to deduce $\rho_0(t)$. In this case, $H_0$ is the Hamiltonian in (\ref{H}) with $\alpha_1=\alpha_2$. Then we use the operator $R$ to deform $\rho_0(t)$ as in (\ref{26}), and we recover a very complicated matrix $\rho$, whose entries are given above. This is our RDM.
 Notice that, by construction, $R$ is not unitary and it is not invertible at the exceptional point, that is when $\alpha_1\alpha_2 = -1/2$ or $\mu_2 = \mu_3$. Conversely, when $\alpha_1\alpha_2 \neq -1/2$, $R$ is invertible, and the vectors of the families $\mathcal{F}_{\varphi}$ and $\mathcal{F}_{\psi}$ can be obtained from the canonical basis $\{e_j\}$ in the following way: $\varphi_j = R e_j$ and $\psi_j = (R^{-1})^\dagger e_j$, and they satisfy \eqref{29} and \eqref{29-bis}.

\subsubsection{Entropy and purity}
	It is clear that, at least for $\alpha_1 \alpha_2 \neq -1/2$, $\rho(t)$ and $\rho_0(t)$ share the same trace, as well as their derived quantities such as entropy (in view of \eqref{211}) and purity. To highlight possible critical behaviors, we define the initial conditions on the $\lambda_j$'s related to the parameters $\alpha_1$ and $\alpha_2$. Specifically, we set
	$$
	\lambda_j = \frac{|\mu_j|^2}{\sum_j |\mu_j|^2}
	$$
	which guarantees that $\rho_0(0)$ is always positive definite with unit trace, independently of the values of $\alpha_1$ and $\alpha_2$. When $\alpha_1 \alpha_2 \rightarrow \pm \infty$, we have  $\lambda_1 \rightarrow 0$ and $\lambda_{2,3} \rightarrow 1/2$.
	
	Due to the Hermitian evolution of $\rho_0(t)$, and because the Hamiltonian is time-independent, the entropy and the purity of $\rho_0(t)$ are preserved in time, as well as those of $\rho(t)$.
	The behaviors of the purity $tr(\rho^2(t))$ and the trace of the entropy operator $S(\rho(t))$ are shown in Figures \ref{3da}-\ref{3dc} by varying $\alpha_2$ while keeping $\alpha_1 = 1$. 
	As $\alpha_2$ approaches the exceptional point, $\alpha_2 \rightarrow -1/2^{\pm}$, the purity tends to its minimum value of $1/3$, and the entropy reaches its maximum value of $\log(3)$ (Figure \ref{3da}). This indicates that the  RDM describes a complete mixture of states near this point. 
	For both decreasing and increasing values of $\alpha_2$, Figures \ref{3db}-\ref{3dc}, due to the asymptotic behavior of the $\lambda_j$'s, the purity and entropy attain the asymptotic values $\operatorname{tr}(\rho^2)_{\alpha_2 \rightarrow \infty} = 1/2$ and $S(\rho)_{\alpha_2 \rightarrow \infty} = \log(2)$.
	
	\begin{figure}[!ht]
	\hspace*{-0.75cm}\subfigure[]{\includegraphics[width=6.25cm]{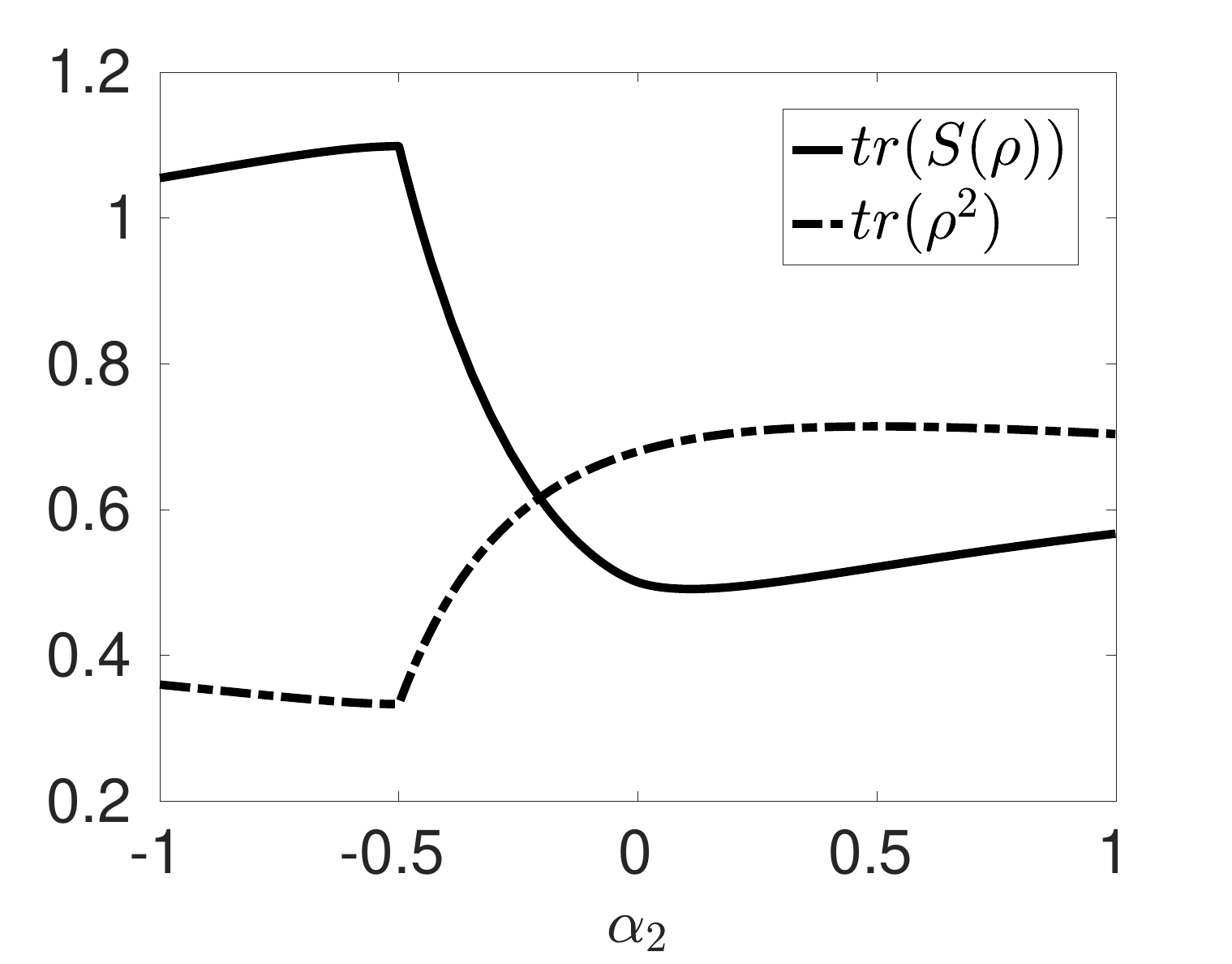}\label{3da}}
	\hspace*{-0.68cm}\subfigure[]{\includegraphics[width=6.25cm]{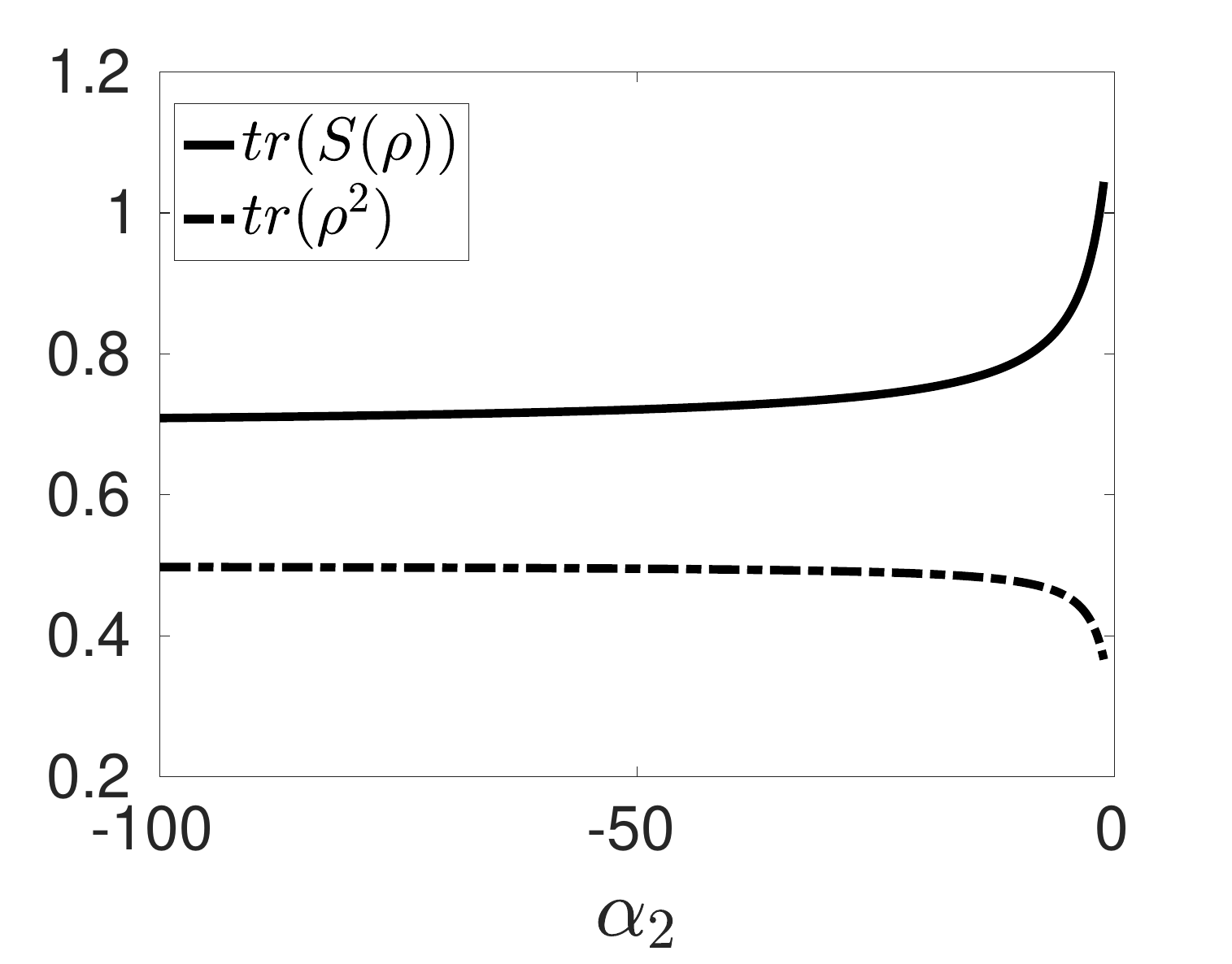}\label{3db}}
	\hspace*{-0.68cm}\subfigure[]{\includegraphics[width=6.25cm]{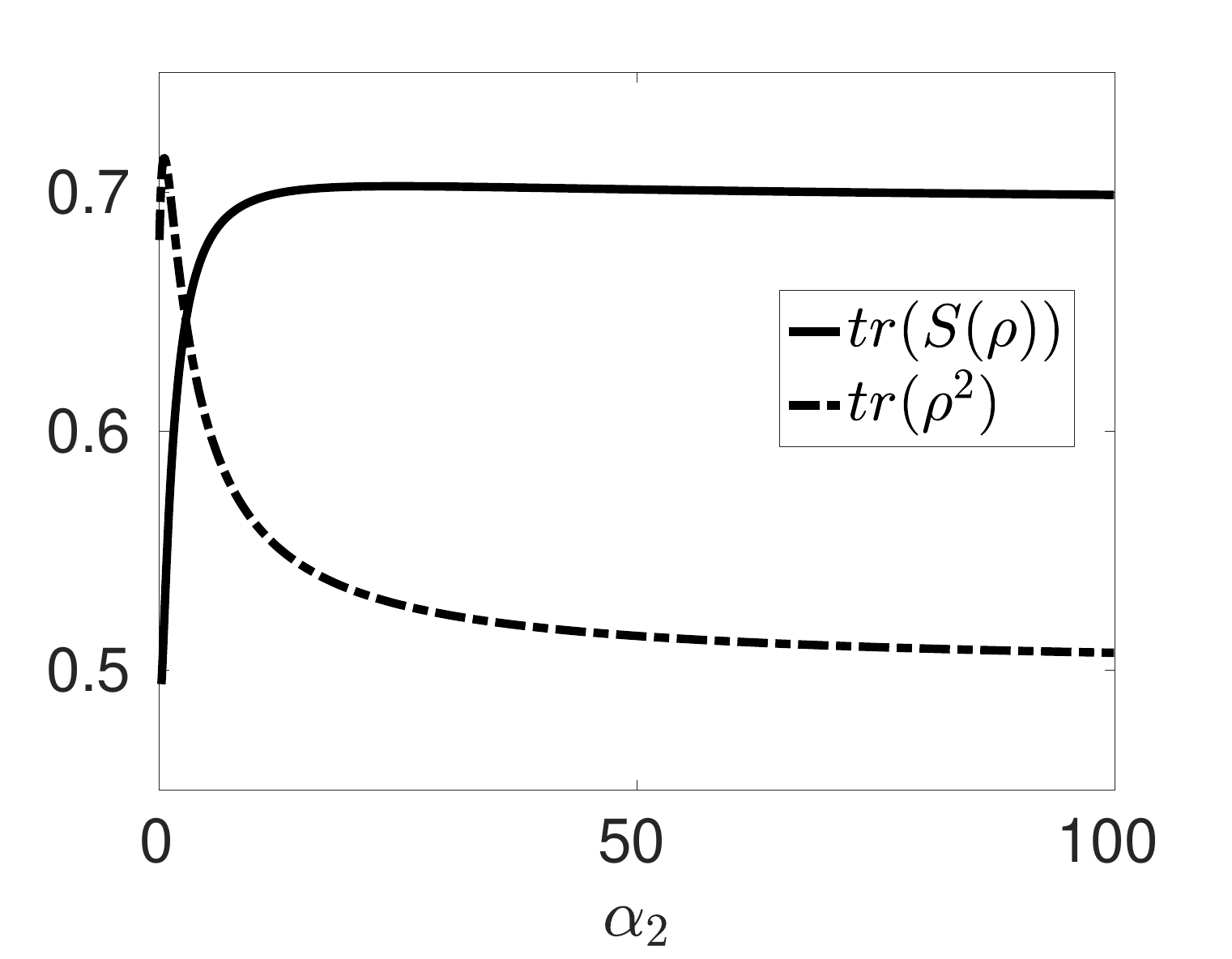}\label{3dc}}
	\caption{\textbf{(a) }Behavior of purity $tr(\rho^2(t))$ and entropy $tr(S(\rho(t)))$ in the vicinity of the exceptional point $\alpha_2 = -1/2$, with $\alpha_1 = 1$ for the RDM I example. At the exceptional point, $R$ is not invertible, and the purity tends to its minimal value of 1/3, while the entropy to its maximal value of $\log(3)$. \textbf{(b)} Behavior of purity  and entropy for decreasing values of $\alpha_2$ with $\alpha_1 = 1$. As $\alpha_2$ moves away from the exceptional point, the purity and entropy approach their asymptotic values of $1/2$ and $\log(2)$, respectively. \textbf{(c)} Same as \textbf{(b)} but for increasing $\alpha_2$. }
\end{figure}
	
	\subsubsection{RDM II: a time independent case}
	
	As done in the previous section we want to recover a RDM starting from a three dimensional density matrix  describing a physical system. Consider
	$$
	\rho=\begin{pmatrix}
		\frac{1}{2} \left(\frac{\lambda_2}{\sqrt{2 \alpha_1 \alpha_2 + 1}} - \frac{\lambda_3}{\sqrt{2 \alpha_1 \alpha_2 + 1}} + \lambda_2 + \lambda_3\right) & 0 & \frac{\alpha_1 (\lambda_3 - \lambda_2)}{\sqrt{4 \alpha_1 \alpha_2 + 2}} \\
		0 & \lambda_1 & 0 \\
		\frac{\alpha_2 (\lambda_3 - \lambda_2)}{\sqrt{4 \alpha_1 \alpha_2 + 2}} & 0 & \frac{1}{2} \left(-\frac{\lambda_2}{\sqrt{2 \alpha_1 \alpha_2 + 1}} + \frac{\lambda_3}{\sqrt{2 \alpha_1 \alpha_2 + 1}} + \lambda_2 + \lambda_3\right) \\
	\end{pmatrix}.
	$$
	This is an RDM   well defined whenever $\alpha_1\alpha_2>-1/2$, and one can verify that it  can formally written  as $\rho=R \rho_0 R^{-1}$, where $R$ is again given in \eqref{Rm}. Here $\rho_0$ is a DM describing a system which is in equilibrium due to an immersion in a heath bath, \cite{brody},
	and whose expression is
	$\rho_0 = \sum_j \lambda_j |e_j\rangle \langle e_j|$, where 
	\be \label{lambda}
	\lambda_j = \dfrac{e^{- \beta \mu_j}}{\sum_je^{- \beta \mu_j}},
	\en
	being $ \beta = 1/ k T$ with $k$ the Boltzmann's constant,  $T$  the temperature of the bath, and where we are using $\mu_1,\mu_2,\mu_3$,  the eigenvalues of the Swanson's model. We stress here that $\rho_0$ is a DM only in the case the $\mu_j's$ are real, that is for $\alpha_1\alpha_2\geq-1/2$, since otherwise the constraint $\lambda_j\in[0,1]$ would be violated, so that we shall work only in the \textit{un-broken region} of the Swanson's model.  
The eigenvalues of $\rho_0$ are all equal to 1/3 when $\alpha_1\alpha_2=-1/2$, and reach asymptotic values $\lambda_2\rightarrow1,\lambda_{1,3}\rightarrow0$ as $\alpha_1\alpha_2\rightarrow+\infty$. This means that the system, asymptotically, is in the pure state  $|e_1\rangle$ with a rate that increases with  $\beta$.
%
%
%
In this configuration, we can formally derive the entropy operator for $\rho$, 
$$S(\rho) = - \sum_j \lambda_j \log(\lambda_j) |\varphi_j\rangle \langle \psi_j|,$$
in accordance with \eqref{25} and \eqref{211}, with entropy given by:
$$
	tr(S(\rho)) = - \frac{\log\left(\frac{1}{1+e^x+e^{2X}}\right) + e^X \log\left(\frac{1}{1+2\cosh(X)}\right) + e^{2X} \log\left(\frac{e^X}{1+2\cosh(X)}\right)}{1+e^X+e^{2X}},
$$
and the purity 
$$tr(\rho^2)=1-\frac{2}{2 \cosh \left( X \right)+1},$$
where \(X = \beta \sqrt{1 + 2 \alpha_1 \alpha_2}\). The behaviors of the entropy and the purity are depicted in Figure \ref{fig_unbroken} for various values of $\beta$ and under the condition $\alpha_1\alpha_2 > -1/2$.
We observe  that when \(\alpha_1\alpha_2 \rightarrow -1/2^+\), close to the formation of the exceptional point, the entropy $tr(S(\rho))$ reaches its maximum allowed value of $\log(3)$, while the purity tends to 1/3,  indicating a fully mixed state. Instead, in the asymptotic regime $\alpha_1\alpha_2\rightarrow\infty$ we obtain \(tr(S(\rho))\rightarrow0\) and \(tr(\rho^2)\rightarrow1\) meaning that, asymptotically, the RDM become a RPS represented by $|\varphi_2\rangle \langle \psi_2|$. 
	\begin{figure}[!ht]
	\centering
	\subfigure[]{\includegraphics[width=8cm]{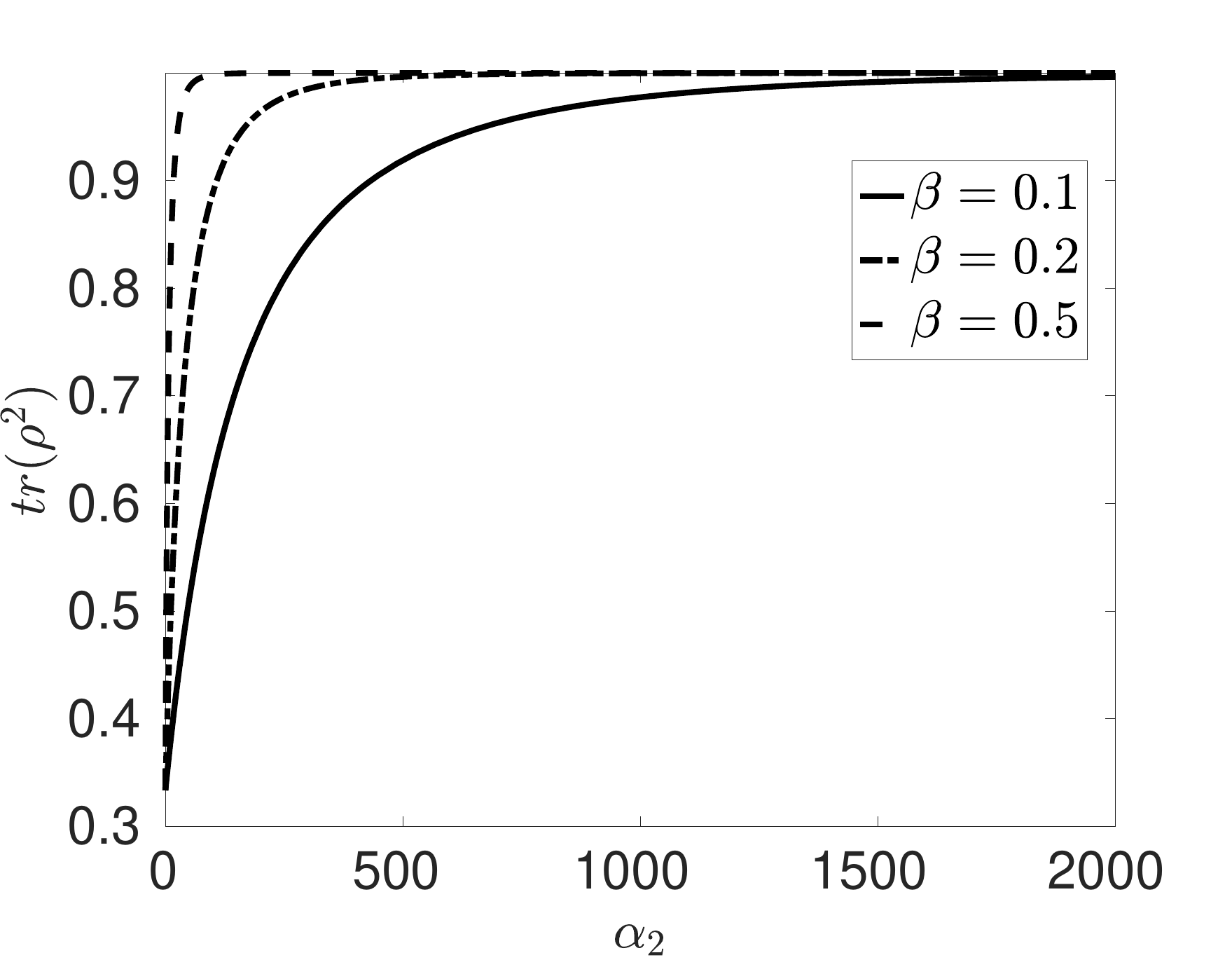}}
	\subfigure[]{\includegraphics[width=8cm]{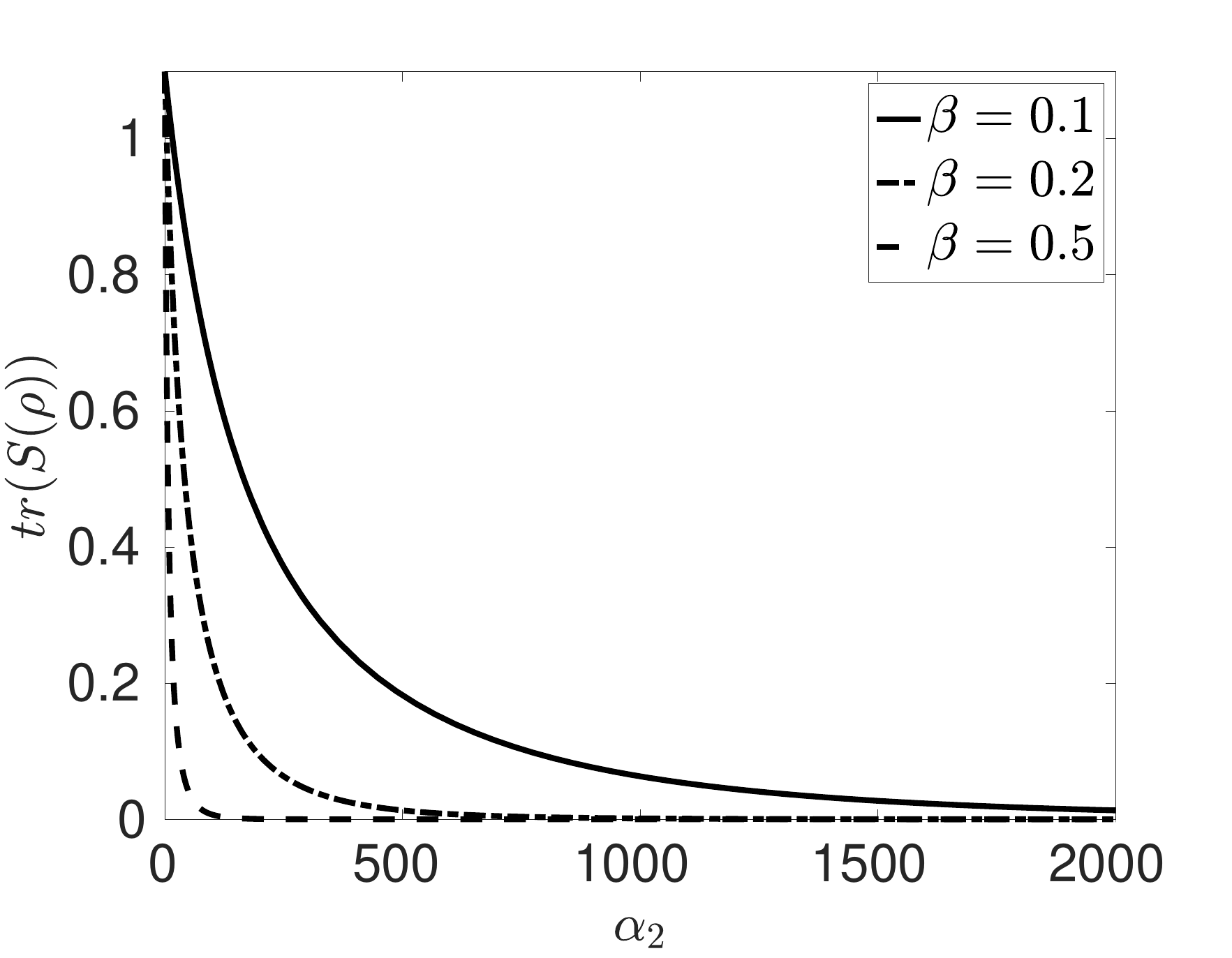}}
	\caption{\textbf{(a)} Behavior of the purity $tr(\rho^2)$ for various value of $\beta$ and with $\alpha>-1/2$, $\alpha_1=1$ for the RDM II example. As $\alpha_2\rightarrow -1/2$, $\rho$ tends to a Riesz pure state, whereas for $\alpha_2\rightarrow \infty$, $\rho$ is a fully mixed state. \textbf{(b)}) Same as \textbf{(a)} but for the entropy $tr(S(\rho))$.}\label{fig_unbroken}
\end{figure}	

\subsubsection{GDM in time independent case}\label{sectGDM}
We now focus on the possibility of obtaining a generalized density matrix (GDM) by considering a deformation matrix $R$ that is not invertible and satisfies condition \eqref{213}. When $1 + 2 \alpha_1 \alpha_2 = 0$, and maintaining only $\alpha_1$ as main parameter, the previous deformation matrix $R$ is not invertible, and has the following form
\be\label{R*}
R = \begin{pmatrix}
	0 & \sqrt{2} \alpha_1 &  \sqrt{2} \alpha_1 \\
	1 & 0 & 0 \\
	0 & 1 & 1
\end{pmatrix}.
\en
We notice that, introducing as in Section \ref{sect-IO}, $\varphi_j=Re_j$, $\varphi_2$ and $\varphi_3$ are proportional one to the other. Hence $\F_\varphi$ cannot be a basis of $\Hil=\mathbb{C}^3$, but it is still possible to use $\varphi_1$ and $\varphi_2$ to generate $\Hil_\varphi$, which is essentially $\mathbb{C}^2$.
It is clear that further constraint on $\rho_0$ must be taken into account to  fulfill \eqref{213}. Selecting again a diagonal form $\rho_0 = \sum_j \lambda_j |e_j\rangle \langle e_j|$ with  $\lambda_3=\lambda_2=1/2-\lambda_1/2$, which are different from those considered so far, one can check that \eqref{213} is satisfied by taking 
\be\label{rhogdm}
\rho = \begin{pmatrix}
	(1-\lambda_1)/2 & 0 &  0 \\
	0 & \lambda_1 & 0 \\
	(1-\lambda_1)/2\sqrt{2}\alpha_1 & 0 & 0
\end{pmatrix}.
\en
We observe that $tr(\rho)=\frac{1+\lambda_1}{2}\neq1$, in general. However we also see that (\ref{214}) is satisfied. This simple example shows that a GDM could easily have a trace which is not one.
The above choice allows also to satisfy the intertwining condition \eqref{220} where the entropy operator for $\rho$ is given by
$$
S(\rho)=\left(
\begin{array}{ccc}
	-\left(\frac{1}{2} - \frac{\lambda_1}{2}\right) \log\left(\frac{1}{2} - \frac{\lambda_1}{2}\right) & 0 & 0 \\
	0 & -\lambda_1 \log(\lambda_1) & 0 \\
	-\frac{\left(\frac{1}{2} - \frac{\lambda_1}{2}\right) \log\left(\frac{1}{2} - \frac{\lambda_1}{2}\right)}{\sqrt{2} \alpha_1} & 0 & 0 \\
\end{array}
\right)
$$
  with trace $
  -\left(\frac{1}{2} - \frac{\lambda_1}{2}\right) \log\left(\frac{1}{2} - \frac{\lambda_1}{2}\right) - \lambda_1 \log(\lambda_1).
  $
We emphasize that the case  $\lambda_1\rightarrow 1$ is to be considered singular, in the sense that   $\rho_0=|e_1\rangle \langle e_1|$ and  $\rho=|e_2\rangle \langle e_2|$ so that $R$ is basically an intertwining operator between the pure states represented by $|e_1\rangle$ and $|e_2\rangle$. In this case the purity and entropy are minimal/maximal, respectively, as shown in Figure \ref{f3dgdm}.

	\begin{figure}[!ht]
\begin{center}
	\includegraphics[width=8.25cm]{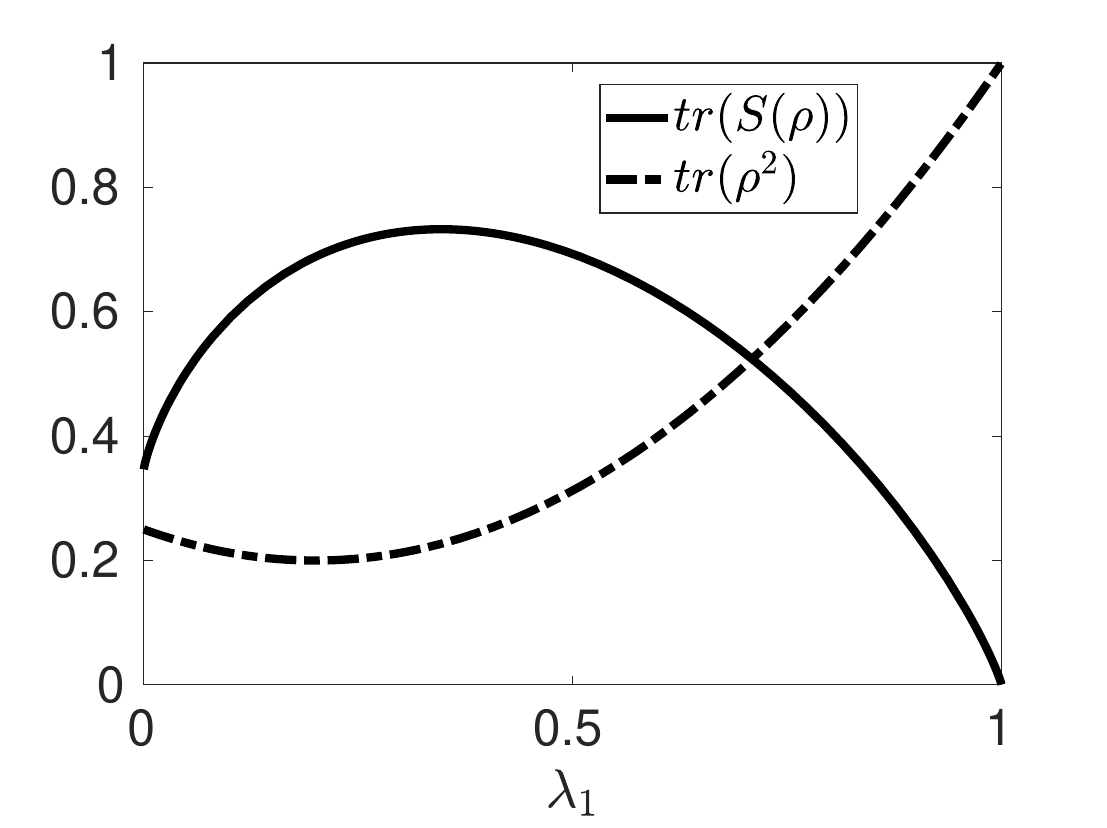}
	\caption{Behavior of purity $tr(\rho^2)$ and entropy $tr(S(\rho))$ as function of the parameter $\lambda_1$ for the GDM example .
	}\label{f3dgdm}
\end{center}
\end{figure}

Before ending this section it might be useful to notice that, since our system lives in $\mathbb{C}^3$, and since $\det(R)=0$, $R$ being the matrix in (\ref{R*}), it follows that $\det(R^\dagger R)=\det(RR^\dagger)=0$. Hence our $R$ has not PI, as we have already observed after Definition \ref{defPI}. For this reason, it is not possible to use (\ref{217}) in the present context.

\section{Conclusions}\label{sect-concl}

In this paper we have proposed some natural extensions of the notions of density matrix, pure state and entropy operators. Our main aim was to use our proposals in connection with non-Hermitian quantum mechanics. In particular we have used a deformation which might appear {\em simple}, introducing new operators which are similar to a {\em standard} DM. These are our RDM. Next we have seen what happens, and what can be done, in case of GDMs, i.e. when the similarity map is replaced by an intertwining operator which is not invertible. Our general results are described in two different, finite-dimensional, models. It is particularly interesting to us to remark that, while RDMs share many of the original properties of the DMs they are similar to, the same is not true for GDMs. In fact, already for the simple example in Section \ref{sectGDM} we have seen that the unity of the trace is lost. This, of course, open the way to many questions, and in particular to the concrete physical relevance of GDM. A deeper understanding of this particular aspect is among our future plans. However, intertwining operators have already proved to be interesting in quantum mechanics, and for this reason we are confident that GDMs could have some role in the analysis of some concrete system. In particular the possibility of using (\ref{217}) was not considered here in the examples. We will analyze this possibility in a future paper, in connection with some model defined on some infinitely-dimension Hilbert space.

\section*{Acknowledgements}
{
F.B. and F.G. acknowledge support under the National Recovery and Resilience Plan (NRRP) funded by the European Union - NextGenerationEU -
Project Title "Transport phonema in low dimensional structures: models, simulations and theoretical aspects" - project code 2022TMW2PY - CUP B53D23009500006.	
F.B. and F.G. also  acknowledge the support of the FFR2023-FFR2024 grant of the University of Palermo. L.S. acknowledges financial support from \textit{Progetto REACTION “first and euRopEAn siC eighT Inches pilOt liNe”}. All authors acknowledge partial financial support from G.N.F.M. of the INdAM.}


\begin{thebibliography}{99}
	
				 \bibitem{specissue2012} C. Bender, A. Fring, U. G\"{u}enther, H. Jones Eds, {\em Special issue on quantum physics with non-Hermitian operators}, J. Phys. A: Math. and Ther., {\bf 45} (2012)
	
	
		\bibitem{benbook} C. M. Bender,  {\em $PT$ Symmetry In Quantum and Classical Physics}, World Scientific Publishing Europe Ltd., London (2019)
	
		
	\bibitem{bagabook} F. Bagarello, J. P. Gazeau, F. H. Szafraniec e M. Znojil Eds., {\em Non-selfadjoint operators in quantum physics: Mathematical aspects}, John Wiley and Sons (2015)
	
		\bibitem{bagprocpa} F. Bagarello, R. Passante, C. Trapani, {\em Non-Hermitian Hamiltonians in Quantum Physics;
		Selected Contributions from the 15th International Conference on Non-Hermitian
		Hamiltonians in Quantum Physics}, Palermo, Italy, 18-23 May 2015, Springer (2016)
	
	
	\bibitem{petru} H. Breuer, F. Petruccione, {\em The theory of open quantum systems}, Oxford University Press (2002)
	
	
				\bibitem{bagspringer} F. Bagarello, {\em Pseudo-Bosons and Their Coherent States}, Springer, Mathematical Physics Studies, 2022
	
		\bibitem{heil} C. Heil, {\em A basis theory primer: expanded edition}, Springer, New York, (2010)
	
	
	\bibitem{chri} O. Christensen, {\em An Introduction to Frames and Riesz Bases}, Birkh\"auser, Boston, (2003)
	
	
	\bibitem{brody} D. C. Brody, {\em Biorthogonal Quantum Mechanics}, J. Phys. A: Math. Theor. {\bf 47} 035305 (2013)
	
	
	\bibitem{mosta} A. Mostafazadeh, {\em Pseudo-hermitian quantum mechanics},  Int. J. Geom. Methods Mod. Phys., {\bf 7}, 1191-1306 (2010)
	
		\bibitem{bagchi} A. Sinha, A. Ghosh, B. Bagchi, {\em Exceptional points and ground-state entanglement spectrum for a fermionic extension of the Swanson oscillator}, arXiv:2401.17189 (2024)
	
	
	\bibitem{harviou} L. Herviou, N. Regnault, J. H. Bardarson,  {\em Entanglement spectrum and symmetries in non-Hermitian fermionic non-interacting
		models}, SciPost Phys. 7, 069 (2019)
	
	\bibitem{sergi} A. Sergi, K. G. Zloshchastiev, {\em Non-Hermitian quantum dynamics of a two-level system and models of dissipative
		environments}, Int. J. Mod. Phys. B, {\bf 27}, 1350163 (2013) 
	
	
		\bibitem{intop} Kuru S., Tegmen A., Vercin A., {\em Intertwined isospectral potentials in an arbitrary dimension},
	J. Math. Phys, {\bf 42}, No. 8, 3344-3360, (2001); Kuru S.,
	Demircioglu B., Onder M., Vercin A., {\em Two families of
		superintegrable and isospectral potentials in two dimensions}, J.
	Math. Phys, {\bf 43}, No. 5, 2133-2150, (2002); Samani K. A., Zarei
	M., {\em Intertwined hamiltonians in two-dimensional curved spaces},
	Ann. of Phys., {\bf 316}, 466-482, (2005); N. Aizawa, V. K. Dobrev, {\em Intertwining Operator Realization of Non-Relativistic Holography}, Nucl. Phys. B {\bf 828}, 581-593 (2010); B. Midya, B. Roy, R. Roychoudhury, {\em Position Dependent Mass Schroedinger Equation and Isospectral Potentials : Intertwining Operator approach}, J.  Math. Phys., {\bf 51}, 022109 (2010); A. L. Lisok, A. V. Shapovalov, A. Yu. Trifonov, {\em Symmetry and Intertwining Operators for the Nonlocal Gross-Pitaevskii Equation}, SIGMA {\bf 9}, 066, 21 pages (2013)
	
	
	
	
	
	\bibitem{merz} E. Merzbacher, {\em Quantum mechanics}, Second Edition, John Wiley and Sons, New York (1970)
	
	
	
	
	
	\bibitem{roman} P. Roman, {\em Advanced quantum theory}, 
	Addison-Wesley, Reading (1965)
	
	
	
	
	\bibitem{reed} S. Reed, B. Simon, {\em Methods of modern mathematical physics}, Vol I: {\em Functional analysis}, Academic Press-to, New York (1972)
	
	
	
	
	\bibitem{brat} O. Bratteli and D.W. Robinson, {\em Operator
		algebras and Quantum statistical mechanics 1}, Springer-Verlag, Berlin, (1987)
	
	
	
	\bibitem{bagint} F. Bagarello, {\em Intertwining operators for non self-adjoint Hamiltonians and bicoherent states},  J. Math. Phys., {\bf 57}, 103501 (2016)
	
%
%

\bibitem{swan} M. S. Swanson, {\em Transition elements for a non-Hermitian quadratic hamiltonian}, J. Math. Phys., {\bf 45}, 585 (2004)

\bibitem{reboiro}  V. Fernandez, R. Ramirez, M. Reboiro, {\em Swanson Hamiltonian: non-PT-symmetry phase},   J. Phys. A, {\bf 55}, 15303 (2022)	


		
\bibitem{Felski} A. Felski, A. Beygi, C. Karapoulitidis, S.P. Klevansky  {\em Three perspectives on entropy dynamics in a non-Hermitian two-state system}, preprint arXiv:2404.03492 (2024).





\end{thebibliography}
\end{document}